\documentclass[12pt]{article}
\usepackage[english]{babel}
\usepackage[utf8]{inputenc}
\usepackage[T1]{fontenc}
\usepackage{amsbsy}
\usepackage{amsfonts}
\usepackage{amsmath}
\usepackage{amssymb}
\usepackage{amsthm}
\usepackage{graphicx} 
\usepackage[pdftex]{color}
\usepackage{enumerate}
\usepackage{float}
\usepackage{geometry, calc, color}
\usepackage{indentfirst}
\usepackage{longtable}
\usepackage{multirow}
\usepackage{natbib}
\bibliographystyle{Chicago}
\usepackage{xr-hyper}
\usepackage{hyperref}
\usepackage{caption}
\usepackage{subcaption}
\usepackage{booktabs}
 \usepackage{multirow}
\usepackage{xcolor}
\usepackage{comment}
\usepackage[linesnumbered,ruled,vlined]{algorithm2e}

\usepackage{lmodern} 

\newtheorem{Remark}{Remark}[section]
\newtheorem{Lemma}{Lemma}[section]

\newtheorem{Definition}{Definition}[section]
\newtheorem{Corollary}{Corollary}[section]
\newtheorem{Theorem}{Theorem}[section]
\newtheorem{Proposition}{Proposition}[section]

\setcounter{equation}{0}

\usepackage[textwidth=3cm, textsize=footnotesize]{todonotes} 

\hypersetup{
  colorlinks=true,
  linkcolor=red,
  citecolor=teal,
  filecolor=blue,
  urlcolor=blue,
}
\newcommand{\blind}{1}

\SetCommentSty{mycommfont}


\usepackage{anysize}
\marginsize{2cm}{2.5cm}{2.5cm}{2cm}

\begin{document}

\def\spacingset#1{\renewcommand{\baselinestretch}%
	{#1}\small\normalsize} \spacingset{1}

\if1\blind
{
	\title{\bf Time-Varying Dispersion Integer-Valued \\ GARCH Models\footnote{Paper published in the Journal of Time Series Analysis. \\Open Access available at \url{https://onlinelibrary.wiley.com/doi/10.1111/jtsa.12838}}}
	 \author{Wagner Barreto-Souza$^\sharp$\footnote{Email: \textcolor{teal}{\texttt{wagner.barreto-souza@ucd.ie}}},\,\, Luiza S.C. Piancastelli$^\sharp$\footnote{Email: \textcolor{teal}{\texttt{luiza.piancastelli@ucd.ie}}},\,\, Konstantinos Fokianos$^\flat$\footnote{Email: \textcolor{teal}{\texttt{fokianos@ucy.ac.cy}}}\,\\ and Hernando Ombao$^\star$\footnote{Email: \textcolor{teal}{\texttt{hernando.ombao@kaust.edu.sa}}}\hspace{.2cm}\\
		{\normalsize \it $^\sharp$School of Mathematics and Statistics, University College Dublin, Dublin 4, Republic of Ireland}\\
			{\normalsize \it $^\flat$Department of Mathematics and Statistics, University of Cyprus, Nicosia, Cyprus}\\
		{\normalsize \it $^\star$Statistics Program, King Abdullah University of Science and Technology, Thuwal, Saudi Arabia}}
	\maketitle
} \fi
\if0\blind
{
	\bigskip
	\bigskip
	\bigskip
	\begin{center}
		{\LARGE\bf Time-Varying Dispersion Integer-Valued GARCH Models}
	\end{center}
	\medskip
} \fi

\begin{abstract}
We introduce a general class of INteger-valued Generalized AutoRegressive Conditionally Heteroscedastic (INGARCH) processes by allowing simultaneously time-varying mean and dispersion parameters. We call such models time-varying dispersion INGARCH (tv-DINGARCH) models. More specifically, we consider mixed Poisson INGARCH models and allow for dynamic modeling of both mean  and dispersion parameters. We derive conditions to obtain stochastic stability of tv-DINGARCH processes. Additionally, we study maximum likelihood estimation in detail including its asymptotic distribution. A restricted bootstrap procedure is proposed for testing constant dispersion against time-varying dispersion. Monte Carlo simulation studies are presented for checking point estimation, standard errors, and the performance of the restricted bootstrap approach. We apply the tv-DINGARCH process to model the weekly number of reported measles infections in North Rhine-Westphalia, Germany, from January 2001 to May 2013, and compare its performance to the ordinary INGARCH approach.
\end{abstract}
{\it \textbf{Keywords}:} Autocorrelation; Count time series; Overdispersion; Time-varying dispersion parameter; Volatility.

\noindent {\it \textbf{MOS subject classifications}:} 62M09; 62M10.

\section{Introduction}
  
Modeling count time series data is a challenging and exciting research topic with applications in many areas such as epidemiology, sociology, economics, and health science. A well-established methodology, for dealing with count time series data, has been developed under the framework of the so-called INteger-valued Generalized AutoRegressive Conditional Heteroscedastic (INGARCH) models, which have been initially studied and explored by \cite{hei2003}, \cite{ferlandetal2006}, \cite{foketal2009}, and \cite{fok2011}. The INGARCH nomenclature emerges from the fact that for a Poisson distribution (this assumption is imposed in the aforementioned papers), the mean equals its variance. Therefore modeling of mean implies modeling of variance like in the classic  GARCH models introduced by \cite{bollerslev1986}. Consequently, such models are considered in the literature as an integer counterpart of the GARCH models although this terminology should be used cautiously because ordinary GARCH processes are employed for modeling time-varying variances.

A Poisson $\mbox{INGARCH}(p,q)$ (with $p,q\in\mathbb N$) model $\{Y_t\}_{t\in\mathbb Z}$ is defined by 
\begin{eqnarray}\label{pingarch}
	Y_t|\mathcal F_{t-1}\sim\mbox{Poisson}(\lambda_t),\quad
	\lambda_t\equiv E(Y_t|\mathcal F_{t-1})=\beta_0+\sum_{i=1}^p\beta_i Y_{t-i}+\sum_{j=1}^q\alpha_j\lambda_{t-j},
\end{eqnarray}	 
where $\mathcal F_{t}=\sigma\{Y_{s}, s \leq t\}$, $\beta_0>0$, $\beta_i\geq0$ and $\alpha_j\geq0$, for all $i=1,\ldots,p$ and $j=1,\ldots,q$. Since $E(Y_t|\mathcal F_{t-1})=\mbox{Var}(Y_t|\mathcal F_{t-1})=\lambda_t$, the model specifies dynamic processes for both the conditional mean and variance, as previously stated.

The Poisson INGARCH model has been extensively considered in the literature. However, due to limitations for fitting adequately real count time series data (e.g. not capturing all the sources of overdispersion although the model itself is overdispersed), several variants of the Poisson INGARCH model have been proposed. These include the negative binomial (NB) INGARCH process by \cite{zhu2011} and \cite{chrfok2014} or the more general mixed Poisson INGARCH models by \cite{chrfok2015}, \cite{silbar2019}, among others. Zero-inflated versions of the Poisson and negative binomial INGARCH models have been studied by \cite{zhu2012a}, while processes dealing with both overdispersion and underdispersion were proposed by \citet{zhu2012b,zhu2012c} and \cite{xuetal2012}. A negative binomial quasi-likelihood based inference for general integer-valued time series models with application to the Poisson and NB INGARCH models has been developed by \cite{aknetal2018}. 

An alternative process to the linear model given in \eqref{pingarch}, was introduced by \cite{foktjo2011} through the log-linear INGARCH processes. These models cope with both negative and positive autocorrelation functions and allow the inclusion of covariates in a straightforward way.  More recently \cite{weietal2022} considered a softplus link function  instead of a logarithmic one.

Based on the discussion so far, the Poisson and negative binomial distributions are frequently used in applications for count time series analysis. In several cases,
the negative binomial model fits more adequately because this distribution depends on a dispersion parameter which facilitates 
flexible modeling. The dispersion parameter is usually assumed fixed and is estimated by using the Pearson residuals; see \cite{chrfok2014} for more. 
The main goal of this paper is to propose a novel and general class of INGARCH processes based on the mixed Poisson distributions (the negative binomial distribution belongs to this class) by {\it allowing both time-varying mean and dispersion parameters}. We call these models time-varying dispersion INGARCH (tv-DINGARCH) models.   Hence, we develop methodology and introduce new models for integer-valued time series, where the assumption of constant dispersion might not hold. The usefulness of the tv-DINGARCH processes over the ordinary INGARCH models will be illustrated in Section \ref{sec:application}, where we show that such processes fit better real data whose stylistic facts cannot be dealt with the existing INGARCH models. 

For example, in a count regression context, \cite{barsim16} demonstrated through a data application on the attendance behavior of high school juniors that a constant dispersion assumption can be violated. Generalized linear models allowing regression structures for both mean and dispersion/precision parameters have been discussed previously 
by \cite{efr1986} and \cite{smy1989}, among others. In this work, we further develop this line of research by studying in detail the joint modeling of mean and dispersion parameters in the context of dependent data. We put special emphasis on the case of INGARCH models for count time series and demonstrate the usefulness of such methodological 
development.  Besides extending the traditional INGARCH models, the proposed class includes the case of constant mean and time-dependent variance, which is a new model class that does not fall in the ordinary INGARCH model approach. Another important feature is volatility modeling, which is a well-explored topic for the case of continuous-valued time series but has been neglected in the context of integer-valued time series. Although the traditional Poisson INGARCH processes consider a time-dependent conditional variance, this in turn is solely driven by the dynamics of the mean process. The main disadvantage of this approach is that imposes severe restrictions on the volatility modelling as illustrated in the real data application in this paper. The class  tv-DINGARCH models relaxes this assumption by considering a time-dependent dispersion parameter, therefore also controlling the conditional variance and allowing for additional sources of volatility.

In the context of time-varying models for INGARCH processes, \cite{roykar2021} introduced a Poisson INGARCH model with time-varying coefficients and  \cite{ratsam2023} study a time-varying INGARCH model based on the zero-inflated Poisson distribution. These authors consider dynamics for the inflation parameter which is assumed to depend on exogenous variables.
Furthermore, \citet{doukhanetal2022} study in detail first order non-stationary INGARCH models and prove mixing conditions
under natural assumptions. Our approach is different in several aspects. First, we consider mixed Poisson distributions (rather than Poisson and zero-inflated Poisson models) which cover a broad class of count distributions that include the Poisson and negative binomial.  In addition, we consider dynamics that enter through the dispersion parameter. One advantage of the latter feature is that we are able to establish desirable stability properties for the models we consider, such as stationarity and ergodicity, by employing e-chain theory; see \cite{meytwe1993}.

The paper is organized as follows. In Section \ref{sec:bingarch}, we define the class of tv-DINGARCH models and then derive its stochastic properties. We establish conditions ensuring stationarity and ergodicity of the count processes. Section \ref{sec:stat_inf} is devoted to the estimation of the parameters and the derivation of their associated asymptotic properties as well. Aiming at testing constant dispersion in practice, a restricted bootstrap procedure is proposed in Section \ref{sec:test}. Monte Carlo simulation studies are presented for checking point estimation, standard errors, and the performance of the restricted bootstrap approach. In Section \ref{sec:application},  we apply the tv-DINGARCH process to model the weekly number of reported measles infections in North Rhine-Westphalia, Germany, from January 2001 to May 2013, and compare its performance to the ordinary INGARCH approach. Some technical results and proofs are provided in the Appendix. This paper contains Supplementary Material.

\paragraph{Notation:} 
We say that a random variable $Y$ follows a mixed Poisson (MP) distribution if satisfies the stochastic representation $Y|Z=z\sim\mbox{Poisson}(\lambda z)$, with $Z$ following some non-negative distribution with $E(Z)=1$ (standardization) and $\mbox{Var}(Z)=\phi^{-1}$, for $\lambda,\phi>0$. In this case, $E(Y)=\lambda$ and $\mbox{Var}(Y)=\lambda+\lambda^2/\phi$. We denote $Y\sim\mbox{MP}(\lambda,\phi;\mathcal D)$, where $\mathcal D$ denotes the distribution of $Z$. A random variable $Z$ following a Gamma distribution, with respective shape and scale parameters $a>0$ and $b>0$, is denoted by $Z\sim\mbox{Gamma}(a,b)$, where $E(Z)=a/b$ and $\mbox{Var}(Z)=a/b^2$. If the mixing variable  $Z\sim\mbox{Gamma}(\phi,\phi)$ (i.e. $\mathcal D=\mbox{Gamma}$), then we obtain that $Y$ follows a negative binomial (NB) distribution, i.e.  $Y\sim\mbox{NB}(\lambda,\phi)$. 

For an $l$-dimensional vector ${\bf x}=(x_1,\ldots,x_l)^\top$, let $\|{\bf x}\|_p=(\sum_{i=1}^l|x_i|^p)^{1/p}$, for $p\in[1,\infty)$, and $\|{\bf x}\|_p=\max_{1\leq i\leq l}|x_i|$ for $p=\infty$. The induced $p$-norm for a $m\times l$ matrix ${\bf C}$  is then defined by $\|{\bf C}\|_p=\max_{{\bf x}\neq{\bf0}}\{\|{\bf C x}\|_p/\|{\bf x}\|_p:\, {\bf x}\in\mathbb R^l\}$, for $p\in[1,\infty]$. For a square matrix {\bf A}, we denote its spectral radius by $\rho({\bf A})$.

\section{The tv-DINGARCH Models}\label{sec:bingarch}

In this section, we propose a class of time-varying dispersion INGARCH models by allowing both mean and dispersion parameters to depend on the lagged values of the observed series and their past values as follows.

\begin{Definition}\label{def:dingarch} (tv-DINGARCH processes)
	A $\mbox{tv-DINGARCH}(p_1,p_2,q_1,q_2)$ process $\{Y_t\}_{t\in\mathbb Z}$ is defined by $Y_t|\mathcal F_{t-1}\sim\mbox{MP}(\lambda_t,\phi_t;\mathcal D)$, with
	\begin{eqnarray}\label{DINGARCH_dyn}
		\lambda_t=f(Y_{t-1},\ldots,Y_{t-p_1};\lambda_{t-1},\ldots,\lambda_{t-q_1}),\quad
		\phi_t=g(Y_{t-1},\ldots,Y_{t-p_2};\phi_{t-1},\ldots,\phi_{t-q_2}),
	\end{eqnarray}	 
	where $\mathcal F_{t}=\sigma\{Y_{s}, s \leq t\}$ and $f:\mathbb N^{p_1}\times(0,\infty)^{q_1}\rightarrow(0,\infty)$, and $g:\mathbb N^{p_2}\times(0,\infty)^{q_2}\rightarrow(0,\infty)$.
\end{Definition}

Of particular interest to our study and to diverse applications is the negative binomial (NB) model when it is imposed in Definition \ref{def:dingarch}. Then, the  conditional probability function of $Y_t$ given $\mathcal F_{t-1}$  is given by
\begin{eqnarray*}
	P(Y_t=y|\mathcal F_{t-1})=\dfrac{\Gamma(y+\phi_t)}{y!\Gamma(\phi_t)}\left(\dfrac{\lambda_t}{\lambda_t+\phi_t}\right)^y\left(\dfrac{\phi_t}{\lambda_t+\phi_t}\right)^{\phi_t},\quad y\in\mathbb N_0.
\end{eqnarray*}	
The negative binomial model will be assumed to prove the finiteness of moments and to derive the asymptotic theory for conditional maximum likelihood estimators in what follows. 

Definition \ref{def:dingarch} is general and some additional assumptions on the functions $f$ and $g$ are necessary to study in detail these processes.  We consider the first-order model  ($p_1=p_2=q_1=q_2=1$) linear tv-DINGARCH processes, which will be defined in what follows.  This particular linear parametric form is a common choice considered in the literature but also enables a thorough study of stability properties of processes given  by Definition \ref{def:dingarch}.

\begin{Definition}\label{def:lin_dingarch} (Linear tv-DINGARCH processes)
	A linear $\mbox{tv-DINGARCH}(1,1,1,1)$ process $\{Y_t\}_{t \in \mathbb{Z}}$ is given as in Definition \ref{def:dingarch} by assuming that with $f(\cdot)$ and $g(\cdot)$ are linear parametric functions. In other words, assume that 
	\begin{eqnarray}\label{DINGARCH}
		\lambda_t=\beta_0+\beta_1 Y_{t-1}+\beta_2\lambda_{t-1},\quad
		\phi_t=\alpha_0+\alpha_1 Y_{t-1}+\alpha_2\phi_{t-1},
	\end{eqnarray}	 
	where $\beta_0,\alpha_0>0$ and $\beta_1,\beta_2,\alpha_1,\alpha_2\geq0$. 
\end{Definition}

\begin{Remark} \rm 
	We  use  $Y_{t-1}$ to model $\phi_t$ in eq. \eqref{DINGARCH}, and more generally in Def. \ref{def:dingarch}, instead of terms such as $Y^2_{t-1}$ and $|Y_{t-1}|$ considered for the continuous GARCH models and their modification. This is so because $Y_{t-1}$ is non-negative as we deal with count data.  
\end{Remark}	

\begin{Remark}\label{pure_ingarch} \rm 	 Some particular cases are obtained from the  tv-DINGARCH class defined above. When  $\alpha_1=\alpha_2=0$, we obtain the ordinary linear mixed Poisson INGARCH \citep{chrfok2015,silbar2019} models as particular cases. Additionally, by taking $\alpha_0\rightarrow0^+$, we also obtain the Poisson INGARCH model as a limiting member of our proposed class. Another interesting and novel model arises when $\beta_1=\beta_2=0$. Under this setting, the mean of the INGARCH process is constant and the variance is time-dependent as in the case of ordinary GARCH models \citep{bollerslev1986}. Such property does not hold for ordinary Poisson INGARCH  models  given by Equation (\ref{pingarch}).
	When $\beta_1=\beta_2=0$	 we call  \eqref{DINGARCH}  Pure INGARCH (P-INGARCH) process. We use the term ``pure" to emphasize the fact that the  model mimics the traditional continuous GARCH models (constant mean and time-varying variance). Hence, our approach is general and encompasses many different models studied earlier in the literature.
\end{Remark}

Simulated trajectories of the linear tv-DINGARCH models for a few parameter settings are shown in the Supplementary Material. 

\subsection{Stationarity and Ergodicity}\label{subsec:properties}

We now study the stochastic properties of the $\mbox{tv-DINGARCH}(1,1,1,1)$ models.  Linearity is a common assumption in the literature as discussed before, which is justified due to successful empirical applications. Conditions for the existence and stationarity of the process \eqref{DINGARCH} can be established, for example, by following the strategy by \cite{chrfok2014}, which relies on establishing weak dependence (see \cite{douwin2008}). In this work, we prove such properties based on the
theory of  e-chains \citet{meytwe1993}, as follows.

The dynamic latent processes $\{\lambda_t\}_{t\in \mathbb{Z}}$ and $\{\phi_t\}_{t\in \mathbb{Z}}$ given in (\ref{DINGARCH}) can be written compactly  in a vector form. By defining  $\boldsymbol\xi_t=(\lambda_t,\phi_t)^\top$, $\boldsymbol\tau=(\beta_0,\alpha_0)^\top$,
${\bf A}=\begin{pmatrix} 
	\beta_2 & 0\\
	0 & \alpha_2 
\end{pmatrix}$
and
${\bf B}=\begin{pmatrix}
	\beta_1 & 0\\
	0 & \alpha_1 
\end{pmatrix}$,
we obtain  that $\boldsymbol\xi_t=\boldsymbol\tau+{\bf B}(Y_{t-1},Y_{t-1})^\top+{\bf A}\boldsymbol\xi_{t-1}$. 
Using  this  representation, we  adapt the strategy in \cite{liu2012} to establish  existence and stationarity of 
\eqref{DINGARCH}. In that work, the author provided stochastic properties of a bivariate Poisson INGARCH model. The key point is to show that $\{\boldsymbol\xi_t\}_{t\in\mathbb Z}$ is an e-chain; see  \citet[Ch.18]{meytwe1993}. In addition, the property of Geometric Moment Contraction (GMC), as studied by \cite{diafree1999} and \cite{wusha2004}, enables us to prove the desired ergodicity and uniqueness of the stationary distribution of $\{\boldsymbol\xi_t\}_{t\in \mathbb{Z}}$.

	\begin{Remark} \rm
		In what follows, we prove all theoretical results for the negative binomial linear tv-DINGARCH models as given in  \ref{def:lin_dingarch} except Thm. \ref{Thm_StatErg} and Corollary \ref{Corollary:PQmodel} which hold true for any linear tv-DINGARCH mixed Poisson model including the negative binomial model.
\end{Remark}	

\begin{Theorem}\label{Thm_StatErg}
	Let $\{Y_t\}_{t\in\mathbb Z}$ be a tv-DINGARCH process (with fixed mixing distribution $\mathcal D$) as in \eqref{DINGARCH}. \\
	(a) If $\rho({\bf A}+{\bf B})<1$, then there exists at least one stationary distribution to  $\{\boldsymbol\xi_t\}_{t\in \mathbb{Z}}$, where $\rho(\cdot)$ denotes the spectral radius. In addition, if $\|{\bf A}\|_p<1$ for some $1\leq p\leq\infty$, then the distribution is unique.\\
	(b) If $\|{\bf A}\|_p+2^{1-1/p}\|{\bf B}\|_p<1$ for some $p\in[1,\infty]$, then $\{\boldsymbol\xi_t\}_{t\in \mathbb{Z}}$ is a GMC Markov chain with a unique stationary and ergodic distribution.
\end{Theorem}

The proof is given in the Appendix. 
The importance of Theorem \ref{Thm_StatErg} for data analysis, and more generally, for  modeling count time series is that these conditions are sufficient to have consistency and asymptotic normality of the conditional maximum likelihood estimators, as will be addressed in the next section.
In addition, Theorem \ref{Thm_StatErg} can be extended to the higher-order linear  $\mbox{tv-DINGARCH}(p_1,p_2,q_1,q_2)$ processes, that is, consider  Def. \ref{def:dingarch} with $f$ and $g$ given by
\begin{align} \label{eq:lingarchgeneral}
		\lambda_t=\beta_0+ \sum_{i=1}^{p_{1}}\beta_{1i} Y_{t-i}+  \sum_{j=1}^{p_{2}}\beta_{2j}\lambda_{t-j},\quad
	    \phi_t=\alpha_0+    \sum_{i=1}^{q_{1}}\alpha_{1i} Y_{t-i}+ \sum_{j=1}^{q_{2}}\alpha_{2j} \phi_{t-j}.
\end{align}
If we denote by $\boldsymbol\xi_t=(\lambda_t,\phi_t)^\top=\boldsymbol\tau+\sum_{j=1}^q{\bf B}_j(Y_{t-j},Y_{t-j})^\top+\sum_{i=1}^m{\bf A}_i\boldsymbol\xi_{t-i}$, where $\{{\bf A}_i\}_{i=1}^m$ and $\{{\bf B}_j\}_{j=1}^q$ are defined by 
$$
{\bf A}_{j}=\begin{pmatrix} 
	\beta_{2j} & 0\\
	0 & \alpha_{2j} 
\end{pmatrix}
~\mbox{and} ~
{\bf B}_{j}=\begin{pmatrix}
	\beta_{1j} & 0\\
	0 & \alpha_{1j} 
\end{pmatrix},
$$
$m=\max(p_1,p_2)$, and $q=\max(q_1,q_2)$, then, following the same steps as in the proof of Theorem \ref{Thm_StatErg},
the next result can be established; see \citet[Prop. 4.3.1]{liu2012}

\begin{Corollary} \label{Corollary:PQmodel}
	Consider the linear $\mbox{tv-DINGARCH}(p_1,p_2,q_1,q_2)$ processes \eqref{eq:lingarchgeneral}. Put $m=\max(p_1,p_2)$, and $q=\max(q_1,q_2)$. Then, with the same notation as in Thm. \ref{Thm_StatErg}, if  $\sum_{i=1}^m\|{\bf A}_i\|_p+2^{1-1/p}\sum_{j=1}^q\|{\bf B}_j\|_p<1$, for some $p\in[1,\infty]$, then $\{(Y_t,\xi^\top_{t})\}_{t\in\mathbb Z}$ is a stationary and ergodic process. 
\end{Corollary}

We close this section by proving the finiteness of the moments  (see Appendix for a proof) of the first-order negative binomial tv-DINGARCH process. 
\begin{Theorem}\label{moments_P-INGARCH}
	If  $\|{\bf A}\|_1+\|{\bf B}\|_1<1$,  then the first-order negative binomial tv-DINGARCH process $\{Y_t\}_{t\in\mathbb Z}$ given by \eqref{DINGARCH}  satisfies that $E|Y_t|^k < \infty$,  $E|\lambda_t|^k < \infty$  and  $E|\phi_t|^k < \infty$  for 
	any $k\in\mathbb N$. 
\end{Theorem}

	\begin{Remark}
		We conjecture that Theorem \ref{moments_P-INGARCH} can be extended to hold for any linear mixed Poisson tv-DINGARCH model by combining our arguments and the results of  \citet[Subsection 3.2]{karxek2005} concerning moments of mixed Poisson distributions. This is out of the scope of this work since the focus is on the negative binomial model. 
\end{Remark}

\section{Estimation and Asymptotic Theory}\label{sec:stat_inf}

In this section, we study conditional maximum likelihood estimation and provide some numerical experiments for the linear $\mbox{tv-DINGARCH}(1,1,1,1)$ process in Definition \ref{def:lin_dingarch} under the assumption of negative binomial conditional distribution. Furthermore, asymptotic results will be established for this model. These results can be generalized to the case of general order linear models but we restrict our attention to the first-order case (\ref{DINGARCH}) for ease of presentation.

\begin{Remark} \rm 
Some earlier work, in the context of negative-binomial quasi-likelihood inference  for count time series models was reported by \cite{aknetal2018}. More precisely these authors studied model \eqref{DINGARCH} but with a constant dispersion parameter. 
There are several differences between our work and that of \citet{aknetal2018}. 
	\begin{enumerate}
		\item The first major difference is the type of models considered for the dispersion parameter. We assume a dynamic model--recall equation \eqref{DINGARCH}--whereby the choice $\alpha_1=\alpha_2=0$ implies the model considered by Aknouche et al. (2018). So, in this sense, we extend this contribution.
		\item The estimation method considered by  \cite{aknetal2018} is  different from the direct maximization problem we are studying in this work. Indeed, \cite{aknetal2018} consider a two-stage NB-QMLE (see their Sec. 3.2). We have developed theory for direct maximization of the negative binomial quasi-likelihood under an extended model. 
		\item In \cite{aknetal2018} the invertibility of the Fisher information matrix is not proved but instead it is assumed; see Assumption (A9) (existence, finiteness, and invertibility of $J_r$ following the notation of their paper). 
		The invertibility of the information matrix is addressed by our contribution, in a more general form, by the proof of Theorem 3.1.
		\item Furthermore, these authors assume finiteness of moments \cite[Assumption A3]{aknetal2018}, which requires the existence of $\delta>1$ such that $E(X_t^\delta)<\infty$.  These are natural conditions to be assumed since the authors are studying exclusively quasi-likelihood inference. 
		Instead we have proved such condition  for this extended model we consider. There are additional differences, see for instance \cite[Assumption A2]{aknetal2018} and compare with our Lemma \ref{lemma:starting value}. 
	\end{enumerate}	
\end{Remark}

\subsection{Likelihood Inference}\label{subsec:estimation}

Denote by $\boldsymbol{\theta}_{1}=(\beta_0,\beta_1,\beta_2)^\top$, $\boldsymbol{\theta}_{2}=(\alpha_0,\alpha_1,\alpha_2)^\top$, and 
$\boldsymbol{\theta}=(\boldsymbol{\theta}_{1}^\top, \boldsymbol{\theta}_{2}^\top)^\top=(\theta_{1}, \ldots, \theta_{6})^\top$ the parameter vector
and suppose that $Y_1,\ldots,Y_n$ is a sample  from  a NB tv-DINGARCH process $\{Y_t\}_{t=1}^n$. The conditional log-likelihood function of $Y_2,\ldots,Y_n$ given $Y_1=y_1$ is given  by (up to a constant) 
\begin{equation}
	\begin{aligned}  \label{loglikelihood} 
		\ell(\boldsymbol{\theta}) &  =  \sum_{t=2}^n \ell_t(\boldsymbol{\theta})  \\
		\ell_t(\boldsymbol{\theta})  & =  y_t[\log\lambda_t-\log(\lambda_t+\phi_t)]+\phi_t[\log\phi_t-\log(\lambda_t+\phi_t)]
		+\log\Gamma(y_t+\phi_t)-\log\Gamma(\phi_t),
	\end{aligned}
\end{equation} 	
for $t=2,\ldots,n$, where we have omitted the dependence of $\lambda_t$ and $\phi_t$ on $\boldsymbol\theta$ for simplicity.

The computation of the log-likelihood function (\ref{loglikelihood}) depends on the initial values of $\lambda_0$ and $\phi_{0}$. If we denote by $\tilde{\lambda}_0$  and $\tilde{\phi}_{0}$ some fixed or random starting values of the process \eqref{DINGARCH}, then the observed log-likelihood function is given by  
	\begin{equation}
		\begin{aligned}
			\tilde{\ell}(\boldsymbol{\theta}) & = \sum_{t=1}^{n} \tilde{\ell}_{t}(\boldsymbol{\theta}) \\
			\tilde{\ell}_t(\boldsymbol{\theta})  & =  y_t[\log \tilde{\lambda}_t-\log( \tilde{\lambda}_t+ \tilde{\phi_t)}]+\tilde{\phi}_t[\log \tilde{\phi}_t-\log(\tilde{\lambda}_t+\tilde{\phi}_t)]
			+\log\Gamma(y_t+\tilde{\phi}_t)-\log\Gamma(\tilde{\phi}_t),
			\label{loglikelihood using the starting value}
		\end{aligned}
	\end{equation}
	with $\tilde{\lambda}_t=\beta_0+\beta_1 Y_{t-1}+\beta_2 \tilde{\lambda}_{t-1}$ and $\tilde{\phi}_t=\alpha_0+\alpha_1 Y_{t-1}+\alpha_2 \tilde{\phi}_{t-1}$.
	
The conditional maximum likelihood estimator (CMLE) of $\boldsymbol{\theta}$, say $\widehat{\boldsymbol{\theta}}$, is given by $\widehat{\boldsymbol{\theta}}=\mbox{argmax}_{\boldsymbol{\theta}\in\Theta}\tilde{\ell}(\boldsymbol{\theta})$, where $\Theta=(0,\infty)\times[0,\infty)^2\times(0,\infty)\times[0,\infty)^2$ denotes the parameter space.
	
In Lemma \ref{lemma:starting value} in the Appendix, we show that  the log-likelihood function  \eqref{loglikelihood using the starting value} is approximated by  \eqref{loglikelihood} with $(Y_t,\lambda_t, \phi_t)_{t \in \mathbb{Z}}$ the stationary and ergodic process obtained by Theorem \ref{Thm_StatErg};  
	see also \cite{francqzakoian2004}, \cite{StraumannandMikosh(2006)} and \cite{chrfok2014}, among others. Note that  \eqref{loglikelihood using the starting value} is the observed likelihood function but we will be employing  \eqref{loglikelihood} in what follows (including the proof of Theorem \ref{asymptotics} about the asymptotic normality of the CMLEs) for simplicity of notation and without loss of generality. In practice, we obtain initial values by employing the two first empirical moments of the data. This strategy has worked well; see Sec. \ref{Sec:Simulations} for empirical evidence.

The score function associated with the conditional log-likelihood function is 
\begin{eqnarray}
	{\bf U}(\boldsymbol{\theta})= \frac{\partial \ell(\boldsymbol{\theta})}{\partial\boldsymbol{\theta}}=\sum_{t=2}^n{\bf U}_t(\boldsymbol{\theta}),
	\label{eq:score function}
\end{eqnarray} 
with 
\begin{eqnarray*}
	{\bf U}_t(\boldsymbol{\theta})=\left(S_{1t}\dfrac{\partial\lambda_t}{\partial\beta_0},S_{1t}\dfrac{\partial\lambda_t}{\partial\beta_1},S_{1t}\dfrac{\partial\lambda_t}{\partial\beta_2},S_{2t}\dfrac{\partial\phi_t}{\partial\alpha_0},S_{2t}\dfrac{\partial\phi_t}{\partial\alpha_1},S_{2t}\dfrac{\partial\phi_t}{\partial\alpha_2}\right)^\top,
\end{eqnarray*}
and
\begin{eqnarray*}
	S_{1t}=\dfrac{\phi_t(y_t-\lambda_t)}{\lambda_t(\lambda_t+\phi_t)},\quad S_{2t}=-\dfrac{y_t-\lambda_t}{\lambda_t+\phi_t}+\log\left(\dfrac{\phi_t}{\lambda_t+\phi_t}\right)+\Psi(y_t+\phi_t)-\Psi(\phi_t),
\end{eqnarray*}
for $t=2,\ldots,n$, where $\Psi(x)=d\log\Gamma(x)/dx$, for $x>0$, is the digamma function. Explicit expressions for the derivatives involving $\lambda_t$ and $\phi_t$ are presented in the Supplementary Material. 

To establish the asymptotic normality of the CMLEs, the following lemma (its proof given in the Appendix) and proposition are necessary.

\begin{Lemma}\label{lemma:integrability}
For $\|{\bf A}\|_1+\|{\bf B}\|_1<1$, the $t$-th term of the score function ${\bf U}_t(\boldsymbol{\theta})$ is integrable for all $t\geq2$.
\end{Lemma}

\begin{Proposition}\label{martingale_diff}
 Let ${\bf U}_t(\boldsymbol{\theta})$ be the $t$-th term of the score function, for $t\geq2$. For $\|{\bf A}\|_1+\|{\bf B}\|_1<1$, $E({\bf U}_t(\boldsymbol{\theta}))=0$ for all $t\geq 2$, where the expectation is taken regarding the model with true parameter vector $\boldsymbol\theta$.
\end{Proposition}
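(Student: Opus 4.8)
The plan is to prove the stronger statement that $\{{\bf U}_t(\boldsymbol\theta)\}_{t\ge 2}$ is a martingale difference sequence with respect to the filtration $\{\mathcal F_t\}_{t\ge 0}$, i.e. that $E({\bf U}_t(\boldsymbol\theta)\mid\mathcal F_{t-1})={\bf 0}$ almost surely; the claim $E({\bf U}_t(\boldsymbol\theta))={\bf 0}$ then follows by the tower property, once we verify that ${\bf U}_t(\boldsymbol\theta)$ is integrable. Throughout, expectations are taken under the NB tv-DINGARCH model with parameter $\boldsymbol\theta$, which is exactly the value at which the score is evaluated --- this matching is what makes the score-identity argument valid.

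First I would record the measurability structure. Since $\lambda_t$ and $\phi_t$ are obtained recursively from \eqref{DINGARCH} using $Y_{t-1}$ and the fixed initial values, they are $\mathcal F_{t-1}$-measurable, and so are all the partial derivatives $\partial\lambda_t/\partial\beta_i$ and $\partial\phi_t/\partial\alpha_j$ appearing in ${\bf U}_t(\boldsymbol\theta)$ (each of these satisfies a linear recursion driven by $\lambda_{t-1},\phi_{t-1},Y_{t-1}$ and its own lag). Consequently every coordinate of ${\bf U}_t(\boldsymbol\theta)$ is an $\mathcal F_{t-1}$-measurable factor times either $S_{1t}$ or $S_{2t}$, so it is enough to show $E(S_{1t}\mid\mathcal F_{t-1})=E(S_{2t}\mid\mathcal F_{t-1})=0$. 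Next I would identify $S_{1t}$ and $S_{2t}$ as $\partial\ell_t/\partial\lambda_t$ and $\partial\ell_t/\partial\phi_t$, i.e. the derivatives of $\log P(Y_t=y_t\mid\mathcal F_{t-1})$ with respect to the conditional mean and dispersion. Differentiating the identity $\sum_{y\ge 0}P(Y_t=y\mid\mathcal F_{t-1})=1$ with respect to $\lambda_t$ (resp. $\phi_t$) and interchanging the sum and the derivative yields $E(\partial\log P/\partial\lambda_t\mid\mathcal F_{t-1})=0$ and $E(\partial\log P/\partial\phi_t\mid\mathcal F_{t-1})=0$, which are precisely $E(S_{1t}\mid\mathcal F_{t-1})=E(S_{2t}\mid\mathcal F_{t-1})=0$; for $S_{1t}$ one may bypass this and simply use $E(Y_t\mid\mathcal F_{t-1})=\lambda_t$. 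For $S_{2t}$ the interchange requires a dominated-convergence justification: on a small neighbourhood of $\phi_t$, which stays in a compact subset of $(0,\infty)$ because $\alpha_0>0$ forces $\phi_t\ge\alpha_0$, one bounds $|\partial P(Y_t=y\mid\mathcal F_{t-1})/\partial\phi|$ uniformly by a summable function of $y$, using the explicit NB mass function together with elementary estimates for the Gamma and digamma functions.

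The remaining point, and the one I expect to require the most care, is integrability, and this is exactly where the hypothesis $\|{\bf A}\|_1+\|{\bf B}\|_1<1$ enters. By Theorem \ref{moments_P-INGARCH} this condition gives $E(Y_t^k)<\infty$ for every $k\in\mathbb N$, and it also forces $\beta_2,\alpha_2<1$, so the recursions for $\lambda_t$, $\phi_t$ and for their parameter-derivatives are stable and each such quantity is bounded by an affine combination of $1$ and of $\lambda_{t-j},\phi_{t-j},Y_{t-j}$ with finite mean. Combining this with the lower bounds $\lambda_t\ge\beta_0$, $\phi_t\ge\alpha_0$, the crude estimates $|S_{1t}|\le(Y_t+\lambda_t)/\beta_0$ and $|S_{2t}|\le(Y_t+\lambda_t)/\alpha_0+|\log(\phi_t/(\lambda_t+\phi_t))|+|\Psi(Y_t+\phi_t)|+|\Psi(\phi_t)|$, and the bound $|\Psi(x)|\le C(1+\log(1+x))$ valid for $x\ge\alpha_0$ (with $C$ depending on $\alpha_0$), one obtains $E\|{\bf U}_t(\boldsymbol\theta)\|<\infty$, completing the argument. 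The main technical work is thus assembling these moment bounds cleanly --- carrying the derivative recursions, controlling the logarithmic and digamma terms via $E(\log(1+Y_t))\le E(Y_t)<\infty$, and invoking Theorem \ref{moments_P-INGARCH} at a sufficiently high order; by contrast, the measure-theoretic differentiation in the previous step is routine once a summable dominating function has been exhibited.
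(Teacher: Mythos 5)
Your proposal is correct and follows the same overall skeleton as the paper's proof: reduce the claim to $E(S_{1t}\mid\mathcal F_{t-1})=E(S_{2t}\mid\mathcal F_{t-1})=0$ by $\mathcal F_{t-1}$-measurability of $\lambda_t$, $\phi_t$ and their parameter derivatives, dispose of $S_{1t}$ via $E(Y_t\mid\mathcal F_{t-1})=\lambda_t$, and then establish integrability from the crude bounds $\lambda_t\geq\beta_0$, $\phi_t\geq\alpha_0$ together with Theorem \ref{moments_P-INGARCH}. The one place you genuinely diverge is the treatment of $S_{2t}$: you invoke the abstract Bartlett identity, differentiating $\sum_{y}P(Y_t=y\mid\mathcal F_{t-1})=1$ in $\phi_t$ and justifying the interchange by dominated convergence on a compact neighbourhood of $\phi_t$, whereas the paper computes $E(\Psi(Y_t+\phi_t)\mid\mathcal F_{t-1})$ in closed form from the series identity $\sum_{y\geq0}\Gamma'(y+a)c^{y}/y!=\Gamma(a)(1-c)^{-a}\{\Psi(a)-\log(1-c)\}$, obtaining $E(\Psi(Y_t+\phi_t)\mid\mathcal F_{t-1})=\Psi(\phi_t)-\log(\phi_t/(\lambda_t+\phi_t))$ and cancelling terms directly. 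Both routes rest on differentiating a convergent series term by term (the paper leaves this implicit; you correctly flag the domination that is needed, and the geometric factor $(\lambda_t/(\lambda_t+\phi_t))^{y}$ together with the logarithmic growth of $\Psi$ makes it routine), so neither is more rigorous in substance. What the paper's explicit computation buys is the closed-form conditional expectation itself, which is reused immediately afterwards to evaluate $l_t$ and $d_t$ and to prove ${\bf J}_1={\bf J}_2$; what your generic score-identity argument buys is independence from the negative binomial form, so it would transfer verbatim to other mixed Poisson conditional distributions. Your integrability bounds ($|S_{1t}|\leq(Y_t+\lambda_t)/\beta_0$, the digamma bound $|\Psi(x)|\leq C(1+\log(1+x))$ for $x\geq\alpha_0$, and an application of the moment theorem at sufficiently high order to control the products with $\partial\phi_t/\partial\alpha_j$) are sound and essentially match the paper's, which instead uses the Alzer inequality $1/(2x)<\log x-\Psi(x)<1/x$ to the same effect.
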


\begin{proof} We will first show that $E({\bf U}_t(\boldsymbol{\theta})|\mathcal F_{t-1})=0$. This fact and the integrability of ${\bf U}_t(\boldsymbol{\theta})$ stated in Lemma \ref{lemma:integrability} will give us the desirable result.
 Since  $E(Y_t |\mathcal F_{t-1}) = \lambda_t$ by definition, we immediately obtain that $E(S_{1t}(\boldsymbol{\theta})|\mathcal F_{t-1})=0$. 
We  compute the conditional expectation of $S_{2t}$ given $\mathcal F_{t-1}$, which involves the term $E\left(\Psi(Y_t + \phi_t)|\mathcal F_{t-1}\right)$ and it is the most difficult to deal with. For $a>0$ and $|c|<1$, we have that 
\begin{eqnarray*}
\sum_{y=0}^\infty\dfrac{\Gamma'(y+a)}{y!}c^y=\dfrac{d}{da}\sum_{y=0}^\infty\dfrac{\Gamma(y+a)}{y!}c^y=\dfrac{d}{da}\dfrac{\Gamma(a)}{(1-c)^a}=\dfrac{\Gamma(a)}{(1-c)^a}\left\{\Psi(a)-\log(1-c)\right\},
\end{eqnarray*}
where $\Gamma'(a)=d\Gamma(a)/da$. Using the above result, it follows that
\begin{eqnarray*}
E\left(\Psi(Y_t + \phi_t)|\mathcal F_{t-1}\right)=\dfrac{(\lambda_t\phi_t^{-1}+1)^{-\phi_t}}{\Gamma(\phi_t)}\sum_{y=0}^\infty\dfrac{\Gamma'(y+\phi_t)}{y!}\left(\dfrac{\lambda_t}{\lambda_t+\phi_t}\right)^y=\Psi(\phi_t)-\log\left(\dfrac{\phi_t}{\lambda_t+\phi_t}\right).
\end{eqnarray*}
Thus 
\begin{eqnarray*}
E(S_{2t}(\boldsymbol{\theta})|\mathcal F_{t-1})&=&E\left(-\dfrac{Y_t-\lambda_t}{\lambda_t+\phi_t}+\log\left(\dfrac{\phi_t}{\lambda_t+\phi_t}\right)+\Psi(y_t+\phi_t)-\Psi(\phi_t)\Big|\mathcal F_{t-1}\right)\\
&=&-\dfrac{E(Y_t|\mathcal F_{t-1})-\lambda_t}{\lambda_t+\phi_t}+\log\left(\dfrac{\phi_t}{\lambda_t+\phi_t}\right)+E(\Psi(y_t+\phi_t)|\mathcal F_{t-1})-\Psi(\phi_t)=0.
\end{eqnarray*}
Hence, for all $t\geq2$, $E({\bf U}_t(\boldsymbol{\theta})|\mathcal F_{t-1})=0$.
From Lemma \ref{lemma:integrability} and under the condition $\|{\bf A}\|_1+\|{\bf B}\|_1<1$, ${\bf U}_t(\boldsymbol{\theta})$ is integrable. Therefore the conclusion holds true. 
\end{proof}

We next sketch  the proof for obtaining the asymptotic distribution of the CMLE for the NB tv-DINGARCH model. A more detailed technical result will be established in Theorem \ref{asymptotics}. Recall \eqref{eq:score function}. By using Proposition \ref{martingale_diff} and the square integrability of ${\bf U}(\boldsymbol\theta)$ (this is proved  using similar arguments to the proof of Proposition \ref{martingale_diff}), we can apply the Central Limit Theorem for martingale difference (see Corollary 3.1 from \cite{halhey1980}, for example) to obtain that 
\begin{eqnarray}\label{score_asympt}
n^{-1/2}{\bf U}(\boldsymbol\theta)\stackrel{d}{\longrightarrow} N({\bf0},\boldsymbol\Omega_1(\boldsymbol{\theta})),\quad \boldsymbol\Omega_1(\boldsymbol{\theta})\equiv\mbox{plim}_{n\rightarrow\infty}{\bf J}_1(\boldsymbol\theta),
\end{eqnarray}
as $n\rightarrow\infty$, 
\begin{eqnarray*}
	{\bf J}_1(\boldsymbol\theta)\equiv\dfrac{1}{n}\sum_{t=2}^n\mbox{Var}({\bf U}_t(\boldsymbol\theta)|\mathcal F_{t-1})=\dfrac{1}{n}\sum_{t=2}^n
	\begin{pmatrix}
		b_t\dfrac{\partial\lambda_t}{\partial {\boldsymbol\theta}_{1}}\dfrac{\partial\lambda_t}{\partial{\boldsymbol\theta}_{1}}^\top & 	{\bf 0} \\
		{\bf 0}  &  l_t\dfrac{\partial\phi_t}{\partial{\boldsymbol\theta}_{2}}\dfrac{\partial\phi_t}{\partial{\boldsymbol\theta}_{2}}^\top
	\end{pmatrix},
\end{eqnarray*}	
with  
\begin{eqnarray*}
 b_t=\dfrac{\phi_t}{\lambda_t(\lambda_t+\phi_t)},\quad l_t=E\left(\Psi^2(Y_t+\phi_t)|\mathcal F_{t-1}\right)-\dfrac{\lambda_t}{\phi_t(\lambda_t+\phi_t)}-\left(\psi(\phi_t)-\log\left(\dfrac{\phi_t}{\lambda_t+\phi_t}\right)\right)^2, 
\end{eqnarray*}
for $t=2,\ldots,n$, where we have used that $E\left(Y_t\Psi(Y_t+\phi_t)|\mathcal F_{t-1}\right)=\lambda_t\left\{\Psi(\phi_t+1)-\log\left(\dfrac{\phi_t}{\lambda_t+\phi_t}\right)\right\}$ and that $\Psi(x+1)=\Psi(x)+x^{-1}$, for $x>0$.

Then, we apply the Law of Large Numbers for stationary and ergodic sequences (see Thm. 6.28 of \cite{breiman1992}) to obtain that 
\begin{eqnarray}\label{hessian_asympt}
n^{-1}\dfrac{\partial{\bf U}(\boldsymbol\theta)}{\partial\boldsymbol\theta}\stackrel{p}{\longrightarrow} \boldsymbol\Omega_2(\boldsymbol{\theta}),\quad \boldsymbol\Omega_2(\boldsymbol{\theta})\equiv\mbox{plim}_{n\rightarrow\infty}{\bf J}_2(\boldsymbol\theta),
\end{eqnarray}
as $n\rightarrow\infty$, where
\begin{eqnarray*}
	{\bf J}_2(\boldsymbol\theta)\equiv\dfrac{1}{n}\sum_{t=2}^nE(-\nabla{\bf U}_t(\boldsymbol\theta)|\mathcal F_{t-1})=\dfrac{1}{n}\sum_{t=2}^n
	\begin{pmatrix}
		b_t\dfrac{\partial\lambda_t}{\partial {\boldsymbol\theta}_{1}}\dfrac{\partial\lambda_t}{\partial{\boldsymbol\theta}_{1}}^\top & 	{\bf 0} \\
		{\bf 0}  &  d_t\dfrac{\partial\phi_t}{\partial{\boldsymbol\theta}_{2}}\dfrac{\partial\phi_t}{\partial{\boldsymbol\theta}_{2}}^\top
	\end{pmatrix},	
\end{eqnarray*}	
with
\begin{eqnarray*}
 d_t=\Psi'(\phi_t)-E\left(\Psi'(Y_t + \phi_t)|\mathcal F_{t-1}\right)-\dfrac{\lambda_t}{\phi_t(\lambda_t+\phi_t)},\quad t=2,\ldots,n, 
\end{eqnarray*}
where $\Psi'(x)=d\Psi(x)/dx$ is the trigamma function. By using the facts that $\Psi'(x)=\dfrac{\Gamma''(x)}{\Gamma(x)}-\Psi^2(x)$ and 
\begin{eqnarray*}
	\sum_{y=0}^\infty\dfrac{\Gamma''(y+a)}{y!}c^y=\dfrac{\Gamma(a)}{(1-c)^a}\left\{\Psi'(a)+[\Psi(a)-\log(1-c)]^2\right\},
\end{eqnarray*}
for $a>0$ and $|c|<1$, we obtain that $l_t=d_t$ for all $t$. Therefore, ${\bf J}_1(\boldsymbol{\theta})={\bf J}_2(\boldsymbol{\theta})$ and $\boldsymbol\Omega_1(\boldsymbol{\theta})=\boldsymbol\Omega_2(\boldsymbol{\theta})$.

By combining the above results and using Taylor's expansion to obtain that 
\begin{eqnarray}\label{asympnorm}
\sqrt{n}(\widehat{\boldsymbol\theta}-\boldsymbol\theta)\stackrel{d}{\longrightarrow} N({\bf0},\boldsymbol\Sigma(\boldsymbol\theta))\,\,\mbox{as}\,\, n\rightarrow\infty,
\end{eqnarray}
with $\boldsymbol\Sigma(\boldsymbol\theta)=\boldsymbol\Omega_2^{-1}(\boldsymbol\theta)\boldsymbol\Omega_1(\boldsymbol\theta)\boldsymbol\Omega_2^{-1}(\boldsymbol\theta)=\boldsymbol\Omega_1^{-1}(\boldsymbol\theta)$.
We have that $\widehat{\boldsymbol\Sigma}={\bf J}_1^{-1}(\widehat{\boldsymbol\theta})$ is a consistent estimator for $\boldsymbol\Sigma$. 

There are required additional technical conditions to ensure the consistency and asymptotic normality of the CML estimators, which are provided in the next theorem, whose  proof can be found in the Appendix. We first  consider lower and upper bounds for the possible values of the components of the parameter vector by introducing (recall that $\theta_{3}=\beta_{2}$ and $\theta_{6}=\alpha_{2}$)

\begin{eqnarray}\label{B-set}
	\mathcal B=\Bigl\{\boldsymbol\theta:     
	0 < \theta_{i,low} \leq \theta_i \leq \theta_{i,up}, i,=1, \ldots,6, ~\mbox{and} ~ \theta_{3,up} < 1, \theta_{6,up} < 1\Bigr\}.
\end{eqnarray}

\begin{Theorem}\label{asymptotics}
	Let $\{Y_t\}_{t=1}^n$ follow the  linear negative binomial tv-DINGARCH(1,1,1,1) process. Assume that $\|{\bf A}\|_1+\|{\bf B}\|_1<1$ and that the true value of $\boldsymbol\theta$ is an interior point of $\mathcal B$. Then, there exists an open set $\mathcal A\subset\mathcal B$ such that the conditional log-likelihood function $\ell(\cdot)$ has a global maximum point on $\mathcal A$, say $\widehat{\boldsymbol\theta}$, with probability tending to 1 as $n\rightarrow\infty$. Furthermore, $\widehat{\boldsymbol\theta}$ is consistent for $\boldsymbol\theta$ and satisfies (\ref{asympnorm}). 
\end{Theorem}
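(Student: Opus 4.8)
\textbf{Proof strategy for Theorem \ref{asymptotics}.}
The plan is to verify the usual package of regularity conditions for asymptotic normality of a conditional maximum likelihood estimator in a stationary ergodic time series model, in the spirit of \cite{foketal2009}, exploiting the two structural results already proved: stationarity and ergodicity of $\{(Y_t,\lambda_t,\phi_t)\}_{t\in\mathbb N}$ (Theorem \ref{Thm_StatErg}, which applies here because taking $p=1$ turns its hypothesis into exactly $\|{\bf A}\|_1+\|{\bf B}\|_1<1$), and finiteness of moments of every order of the negative binomial tv-DINGARCH process (Theorem \ref{moments_P-INGARCH}). Throughout, since $\boldsymbol\theta$ ranges over the bounded set $\mathcal B$ in \eqref{B-set} with true value an interior point, all six coordinates stay bounded away from their extremes; in particular $\lambda_t(\boldsymbol\theta)\ge\beta_0>\beta_{0,low}>0$, $\phi_t(\boldsymbol\theta)\ge\alpha_0>\alpha_{0,low}>0$ uniformly, and $\beta_2,\alpha_2$ are bounded above by constants strictly below $1$. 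These bounds are precisely what controls the reciprocals $1/\lambda_t$, $1/\phi_t$, $1/(\lambda_t+\phi_t)$ and the (poly)gamma terms, exactly as was exploited in the proof of Proposition \ref{martingale_diff}.

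First I would dispose of the fixed initialization. Because $\beta_2$ and $\alpha_2$ are bounded above by a constant $<1$ on $\mathcal B$, the recursions \eqref{DINGARCH} are geometrically contracting in their latent components, so if $\{\widetilde\lambda_t(\boldsymbol\theta),\widetilde\phi_t(\boldsymbol\theta)\}$ is the stationary solution then $\sup_{\boldsymbol\theta\in\mathcal B}\bigl(|\lambda_t(\boldsymbol\theta)-\widetilde\lambda_t(\boldsymbol\theta)|+|\phi_t(\boldsymbol\theta)-\widetilde\phi_t(\boldsymbol\theta)|\bigr)\le C\rho^{t}$ for some $\rho\in(0,1)$ and a random $C$ with finite moments of all orders; the same geometric bound propagates to the first, second and third derivatives with respect to $\boldsymbol\theta$, since these solve linear recursions with the same contraction factors. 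Consequently $n^{-1}\ell(\boldsymbol\theta)$, $n^{-1}{\bf U}(\boldsymbol\theta)$ and $n^{-1}\nabla{\bf U}(\boldsymbol\theta)$ have, locally uniformly on $\mathcal B$, the same almost-sure limits as their stationary counterparts, so one may work directly with the stationary, ergodic version and invoke ergodic theorems.

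Then the three standard pieces. (i) \emph{Uniform strong law and identifiability.} Combining the geometric bounds above with the explicit domination estimates from the proof of Proposition \ref{martingale_diff} (each of $|S_{1t}|$, $|S_{2t}|$ is dominated by a linear function of $Y_t$ with $\mathcal B$-uniform coefficients, while $\partial\lambda_t/\partial\boldsymbol\theta_*$, $\partial\phi_t/\partial\boldsymbol\theta^*$ and their higher derivatives are dominated by affine functions of $Y_{t-1},\dots,Y_1$), Theorem \ref{moments_P-INGARCH} supplies integrable envelopes for $\ell_t$, ${\bf U}_t$, $\nabla{\bf U}_t$ and $\nabla^2{\bf U}_t$; the ergodic theorem then gives locally uniform convergence of $n^{-1}\ell(\cdot)$, $n^{-1}{\bf U}(\cdot)$, $n^{-1}\nabla{\bf U}(\cdot)$ to $L(\boldsymbol\theta):=E[\widetilde\ell_t(\boldsymbol\theta)]$ and its derivatives. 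That $L$ is uniquely maximised at the true $\boldsymbol\theta$ follows from the conditional Kullback--Leibler inequality together with injectivity of $(\lambda,\phi)\mapsto\mathrm{NB}(\lambda,\phi)$ and the non-degeneracy of $\boldsymbol\theta_*\mapsto\partial\lambda_t/\partial\boldsymbol\theta_*$ and $\boldsymbol\theta^*\mapsto\partial\phi_t/\partial\boldsymbol\theta^*$; the latter also yields positive definiteness of $\boldsymbol\Omega_2(\boldsymbol\theta)$, since $b_t>0$ and $d_t=l_t=\mbox{Var}(S_{2t}\mid\mathcal F_{t-1})>0$ almost surely, and $\boldsymbol\Omega_2=\boldsymbol\Omega_1$ as already shown. (ii) \emph{Score CLT.} $\{{\bf U}_t(\boldsymbol\theta),\mathcal F_t\}$ is a square-integrable martingale difference (Proposition \ref{martingale_diff} and the square-integrability remark), its conditional covariances average to $\boldsymbol\Omega_1(\boldsymbol\theta)$ by the ergodic theorem, and a uniform $L^{2+\varepsilon}$ bound on ${\bf U}_t$ — again from Theorem \ref{moments_P-INGARCH} — gives the conditional Lindeberg condition, so Corollary 3.1 of \cite{halhey1980} delivers \eqref{score_asympt}. (iii) \emph{Hessian convergence.} By (i) and the consistency about to be established, $-n^{-1}\nabla{\bf U}(\bar{\boldsymbol\theta}_n)\stackrel{p}{\longrightarrow}\boldsymbol\Omega_2(\boldsymbol\theta)$ for any $\bar{\boldsymbol\theta}_n$ on the segment joining $\widehat{\boldsymbol\theta}$ and $\boldsymbol\theta$. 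Assembling: a second-order Taylor expansion of $\ell$ about $\boldsymbol\theta$ over a sphere of radius of order $n^{-1/2}$ shows, since the quadratic form governed by the positive definite $\boldsymbol\Omega_2(\boldsymbol\theta)$ dominates the linear term (order $n^{-1/2}$ by (ii)) and the remainder (by (i)), that with probability tending to $1$ there is a local maximiser $\widehat{\boldsymbol\theta}$ inside that sphere; this simultaneously produces the open set $\mathcal A$, the global-maximum-on-$\mathcal A$ statement, and $\sqrt n$-consistency. Expanding $\mathbf 0={\bf U}(\widehat{\boldsymbol\theta})$ about $\boldsymbol\theta$ and using (iii) then yields $\sqrt n(\widehat{\boldsymbol\theta}-\boldsymbol\theta)=\boldsymbol\Omega_2(\boldsymbol\theta)^{-1}n^{-1/2}{\bf U}(\boldsymbol\theta)+o_P(1)$, which with \eqref{score_asympt} and $\boldsymbol\Omega_1=\boldsymbol\Omega_2$ gives \eqref{asympnorm} with $\boldsymbol\Sigma=\boldsymbol\Omega_1^{-1}$.

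The main obstacle is the uniformity in $\boldsymbol\theta$: one must exhibit genuinely integrable envelopes for the third-order derivatives of $\ell_t$ — these couple polygamma functions $\Psi,\Psi',\Psi''$ evaluated at $\phi_t$ and at $Y_t+\phi_t$ with first and second derivatives of $\lambda_t$ and $\phi_t$, each of which is a (possibly long) linear combination of $Y_{t-1},\dots,Y_1$ with coefficients that must be bounded uniformly over $\mathcal B$ — and then splice them with the geometric control of the initialization error while keeping $\lambda_t,\phi_t,\lambda_t+\phi_t$ bounded below. The polygamma inequalities of \cite{alz1997} together with their higher-order analogues, combined with Theorem \ref{moments_P-INGARCH}, are the tools for this, but the bookkeeping is the substantive part; by contrast the identifiability and non-degeneracy check needed for invertibility of $\boldsymbol\Omega_2$ is comparatively routine once the $\mathcal B$-constraints are in force.
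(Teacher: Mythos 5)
Your strategy is sound and rests on the same three pillars as the paper's proof --- the martingale-difference CLT for the score (Proposition \ref{martingale_diff} plus Corollary 3.1 of \cite{halhey1980}), the ergodic law of large numbers for the Hessian, and moment/third-derivative domination via Theorem \ref{moments_P-INGARCH} --- but you assemble them by hand, whereas the paper simply verifies conditions (A1)--(A3) of Lemma 3.1 in \cite{jenrah2004} and lets that lemma deliver the local maximiser, consistency, and normality in one stroke; condition (A3) is exactly your third-derivative envelope, which the paper isolates as Lemma \ref{third_derivat_bound}. Two consequences of this difference are worth noting. First, because the Jensen--Rahbek lemma is a purely local result (it only asserts a maximiser on a small open set $\mathcal A$ around the true value, which is all the theorem claims), your Kullback--Leibler identifiability step and the unique maximisation of the limiting criterion $L(\boldsymbol\theta)$ over all of $\mathcal B$ are not needed; your sphere-of-radius-$O(n^{-1/2})$ Taylor argument already reproduces the statement, and the global-identifiability machinery is extra work that would only matter for a strong-consistency claim over the whole parameter space. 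Second, you explicitly handle two points the paper's proof passes over in silence: the positive definiteness of $\boldsymbol\Omega_2(\boldsymbol\theta)$ (implicitly required for (A2) but not verified in the paper) and the discrepancy between the likelihood evaluated with fixed initial values $(\lambda_1,\phi_1)$ and its stationary counterpart, which you control by the geometric contraction coming from $\beta_2,\alpha_2$ being bounded below $1$ on $\mathcal B$. Both additions strengthen rather than deviate from the paper's argument, so your proposal is a more self-contained (if longer) route to the same conclusion.
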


\subsection{Monte Carlo Simulation}
\label{Sec:Simulations}

The finite-sample behavior of the NB $\mbox{tv-DINGARCH}(1,1,1,1)$ model CMLEs is investigated next. To that end, we conduct a Monte Carlo study with 1000 replications of trajectories with length $n=500, 1000$ and four parameter configurations. Two configurations are reported in what follows, and additional results are given in the Supplementary Material. The true values of $\boldsymbol{\theta}$ in the first two configurations (Settings I and II) are $\boldsymbol{\theta}=(\beta_0,\alpha_0,\beta_1,\beta_2,\alpha_1,\alpha_2)^\top=(15, 0.5, 0.2, 0.25, 0.1, 0.3)^\top$ and $\boldsymbol{\theta}=(\beta_0,\alpha_0,\beta_1,\beta_2,\alpha_1,\alpha_2)^\top=(3, 0.1, 0.3, 0.15, 0.2, 0.3)^\top$, respectively. Stationarity and uniqueness of the simulated tv-DINGARCH processes are guaranteed by ensuring that Theorem \ref{Thm_StatErg} holds when  $p=1$, that is, $\|{\bf A}\|_1 + \|{\bf B}\|_1=\beta_2+\alpha_2+\beta_1+\alpha_1< 1$. 
All results are obtained by employing restricted optimization such that stationarity conditions are satisfied. 

Table \ref{tab:simulated} summarises the results obtained for settings I and II. Empirical means and standard deviation (SD) of the CMLEs are reported by sample size. The results show adequate performance of the estimation method, with empirical Monte Carlo means close to the values and standard deviation decreasing with the increase in sample size, as expected.

\begin{table}[h!]
	\centering
	\small
	\begin{tabular}{@{}lllcccccc@{}}
		\toprule
		\multicolumn{3}{l}{Setting I}                                  & \textbf{$\beta_0=15$} & \textbf{$\alpha_0=0.5$} & \textbf{$\beta_1=0.2$} & \textbf{$\beta_2=0.25$} & \textbf{$\alpha_1=0.1$} & \textbf{$\alpha_2=0.3$} \\ \midrule
		\multirow{6}{*}{}  & \multirow{3}{*}{$n = 500$} & Mean & 15.743 & 0.701 & 0.196 & 0.226 & 0.102 & 0.254   \\
		&                          & SD   & 3.900 & 0.581 & 0.045 & 0.153 & 0.023 & 0.153     \\ \vspace{0.1cm} \\

		& \multirow{3}{*}{$n=1000$}   & Mean & 15.491 & 0.590 & 0.199 & 0.233 & 0.100 & 0.285    \\
		&                          & SD   &2.973 & 0.408 & 0.031 & 0.115 & 0.018 & 0.121   \\
 \midrule
				\multicolumn{3}{l}{Setting II}                                  & \textbf{$\beta_0=3$} & \textbf{$\alpha_0=0.1$} & \textbf{$\beta_1=0.3$} & \textbf{$\beta_2=0.15$} & \textbf{$\alpha_1=0.2$} & \textbf{$\alpha_2=0.3$} \\ \midrule
		\multirow{6}{*}{} & \multirow{3}{*}{$n = 500$} & Mean & 3.126 & 0.118 & 0.298 & 0.130 & 0.207 & 0.284 \\
		&                          & SD   &0.515 & 0.052 & 0.045 & 0.092 & 0.036 & 0.081 \\ \vspace{0.1cm} \\
		& \multirow{3}{*}{$n=1000$}   & Mean & 3.098 & 0.107 & 0.299 & 0.135 & 0.203 & 0.293        \\
		&                          & SD   & 0.410 & 0.034 & 0.031 & 0.073 & 0.024 & 0.057       \\ \bottomrule

	\end{tabular}
\caption{Empirical mean and standard deviation (SD) of the Monte Carlo estimates for the NB tv-DINGARCH model parameters under Settings I and II. Results are based on 1000 Monte Carlo replications.}\label{tab:simulated}
\end{table}

Figure \ref{asymptotics_set2} illustrates the asymptotic normality of the CMLEs for setting I. Non-parametric density estimator plots of the standardized parameter estimates are displayed alongside the standard Gaussian density (solid line). Dashed and dotted curves indicate densities estimated from the experiments carried out with $n=500$ and $n=1000$, respectively. The density curves of the parameter estimates are mostly overlapping, but some improvement can be seen with the increase in the sample size. Additional density plots for the other parameter configurations can be found in the Supplementary Material.

\begin{figure}
\centering
\includegraphics[width=0.75\linewidth]{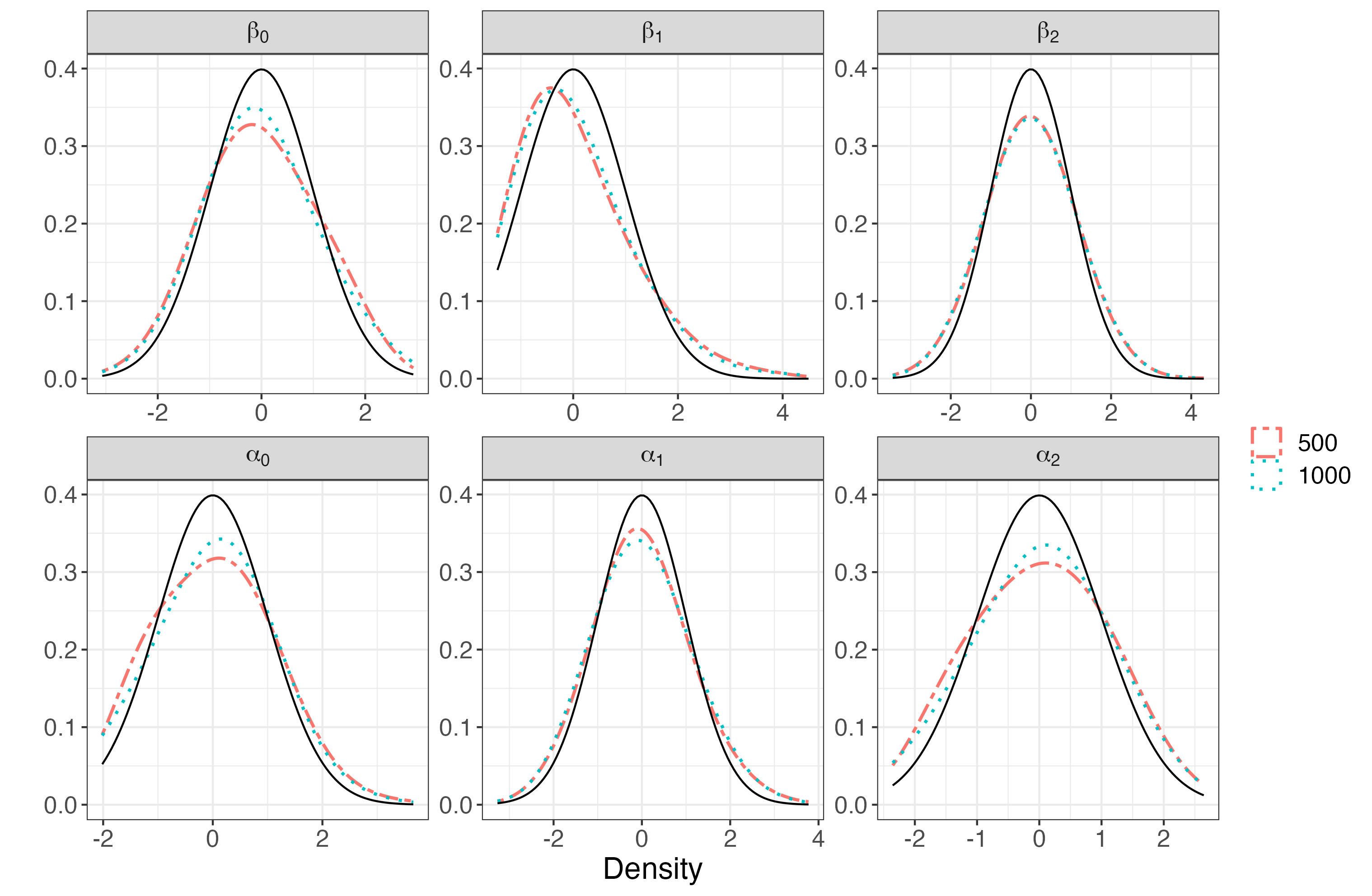}
\caption{Non-parametric density plots of standardized parameter estimates due to Setting I when  $n=500$ and $n = 1000$. The solid line corresponds to the standard Gaussian density function. }\label{asymptotics_set2}
\end{figure}

\section{Testing Constant Dispersion}\label{sec:test}

Testing constant value of the dispersion parameter, i.e. $\phi_t=\alpha_0$ for all $t$ under model \eqref{DINGARCH}, is equivalent to testing $H_0: \alpha_1 = \alpha_2 = 0$ against $H_1: \alpha_1 > 0\,\,\mbox{or}\,\,  \alpha_2 >  0$. Note that the null hypothesis implies that the parameters $\alpha_{1}$ and $\alpha_{2}$ belong to the boundary of the parameter space. Hence, this  is a non-standard testing problem; see, for instance, \cite{sellia1987}, \cite{and2001}, and \cite{crarup2004}. 

We consider the likelihood ratio test  to develop and compare heuristically two parametric bootstrap methods; the classical or unrestricted, and the restricted bootstrap recently developed by \cite{cavaliere2016}. The first method considers the usual parametric bootstrap \citep{efrtib1993} replications based on the unrestricted CMLEs, while the latter uses the CMLEs under $H_0$.  Algorithm \ref{Alg:bootstrap}  specifies computation  of the likelihood ratio  test's p-value with $B$ replications of the restricted or unrestricted bootstrap.

\begin{algorithm}
	\DontPrintSemicolon
	\SetAlgoLined
	\SetNoFillComment
	\LinesNotNumbered 
	\KwIn{$\boldsymbol{Y}$ observed count time series data \\
		$B$ bootstrap replications \\
		$\alpha$ significance level }
	
	\vspace{0.2cm}
	
	1. Obtain $\widehat{\boldsymbol{\theta}}_{\mathcal{H}_0}$ and $\boldsymbol{\widehat{\theta}}$, the model CMLEs under $\mathcal{H}_0$ and $\mathcal{H}_1$;
	
	2. Compute the observed likelihood ratio $LR = -2(\ell(\widehat{\boldsymbol{\theta}}_{\mathcal{H}_0}) - \ell(\boldsymbol{\widehat{\theta}}))$;

	\For{$b \leftarrow 1:B$}{
		
		\vspace{0.2cm}
		
		3A) $\boldsymbol{Y}_b \sim \mbox{NB tv-DINGARCH}(\widehat{\boldsymbol{\theta}}_{\mathcal{H}_0})$; \tcp*[l]{ if restricted bootstrap} 
		
		3B) $\boldsymbol{Y}_b \sim \mbox{NB tv-DINGARCH}(\widehat{\boldsymbol{\theta}})$;\tcp*[l]{if unrestricted bootstrap}
		
		\vspace{0.2cm}
		
		4. Obtain  $\widehat{\boldsymbol{\theta}}_{\mathcal{H}_0}^b$ and $\boldsymbol{\widehat{\theta}}^b$ fitting NB tv-DINGARCH models to $\boldsymbol{Y}_b$ under the null and alternative hypothesis;
		
		5. Let $LR^b = -2(\ell(\widehat{\boldsymbol{\theta}}^b_{\mathcal{H}_0}) - \ell(\boldsymbol{\widehat{\theta}}^b))$, the replicated LR statistic;
		
	}
	
	6. If $p_B = \sum_{b=1}^B I\{ LR^b > LR\}/B < \alpha$ reject $\mathcal{H}_0$;
	
	\caption{Bootstrap likelihood ratio test of constant ($\mathcal{H}_0: \alpha_1 = \alpha_2 = 0$) versus time-varying dispersion ($\mathcal{H}_1: \alpha_1 >  0 \mbox{ or } \alpha_2 >  0$) for a NB tv-DINGARCH model. Alternatives to step 3 yield restricted (3A) or unrestricted bootstrap (3B) estimators of the test's p-value, $p_B$, where $B$ is the number of replications.}\label{Alg:bootstrap}
\end{algorithm}

We use a Monte Carlo simulation study to investigate how these methods achieve the desirable significance levels. Time series from the NB tv-DINGARCH process with 200  observations are simulated under the null hypothesis using four different (varying) settings of the parameter vector $(\beta_0,\beta_1,\beta_2,\alpha_0)^\top$ as follows: (\textbf{C1}) $\beta_0 =2, \alpha_0 = 1$, $\beta_1 = 0.4, \beta_2 = 0.3$, (\textbf{C2}) $\beta_0 =2, \alpha_0 = 1$, $\beta_1 = 0, \beta_2 = 0$, (\textbf{C3}) $\beta_0 =3, \alpha_0 = 0.5$, $\beta_1 = 0.3, \beta_2 = 0.4$, and (\textbf{C4}) $\beta_0 =3, \alpha_0 = 0.5$, $\beta_1 = 0, \beta_2 = 0$. For each configuration, 500 Monte Carlo replications are used to calculate the empirical significance levels by employing the competing methodologies. The number of replications used to estimate bootstrap $p$-values is $B=500$. Table \ref{boot_sig_level} displays the proportion of times that the restricted and unrestricted bootstrap procedures rejected the null hypothesis. Parameter configurations are set in a way that \textbf{C2} and \textbf{C4} are variations of \textbf{C1} and \textbf{C3} that do not include effects on the mean. Additional evidence is provided in the Supplementary Material.

\begin{table}[ht!]
	\centering
	\small
	\begin{tabular}{@{}cccc@{}}
		\toprule
		\textbf{Configuration}                                                                                                      & \textbf{\begin{tabular}[c]{@{}c@{}}Nominal \\ level\end{tabular}} & \textbf{\begin{tabular}[c]{@{}c@{}}Restricted\\ Bootstrap\end{tabular}} & \textbf{\begin{tabular}[c]{@{}c@{}}Unrestricted\\ Bootstrap\end{tabular}} \\ \midrule
		\multirow{2}{*}{\begin{tabular}[c]{@{}c@{}} \textbf{C1}: $\beta_0 =2, \alpha_0 = 1$\\ $\beta_1 = 0.4, \beta_2 = 0.3$\end{tabular}}   & 0.05                                                                  & 0.046                                                                   & 0.000                                                                     \\
		& 0.10                                                                  & 0.088                                                                   & 0.002                                                                     \\ \midrule
		\multirow{2}{*}{\begin{tabular}[c]{@{}c@{}} \textbf{C2}: $\beta_0 =2, \alpha_0 = 1$\\ $\beta_1 = 0, \beta_2 = 0$\end{tabular}}        & 0.05                                                                  & 0.052                                                                   & 0.012                                                                     \\
		& 0.10                                                                  & 0.098                                                                   & 0.036                                                                     \\  \midrule
		\multirow{2}{*}{\begin{tabular}[c]{@{}c@{}} \textbf{C3}: $\beta_0 =3, \alpha_0 = 0.5$\\ $\beta_1 = 0.3, \beta_2 = 0.4$\end{tabular}} & 0.05                                                                  & 0.064                                                                   & 0.002                                                                     \\
		& 0.10                                                                  & 0.104                                                                   & 0.002                                                                     \\  \midrule
		\multirow{2}{*}{\begin{tabular}[c]{@{}c@{}} \textbf{C4}: $\beta_0 =3, \alpha_0 = 0.5$\\ $\beta_1 = 0, \beta_2 = 0$\end{tabular}}     & 0.05                                                                  & 0.042                                                                   & 0.022                                                                     \\
		& 0.10                                                                  & 0.094                                                                   & 0.058                                                                     \\ \bottomrule
	\end{tabular}\caption{Achieved  significance levels of likelihood ratio test  employing Algorithm 1 for various parameter configurations.}\label{boot_sig_level} 
\end{table}

Notably, the restricted parametric bootstrap achieves levels close to nominal significance levels, something that occurs under all four parameter configurations investigated in this study. The unrestricted parametric bootstrap does not provide satisfactory results and it underestimates the significance levels. Hence, in applications, it is important to employ a restricted bootstrap for testing constant dispersion in the tv-DINGARCH models.

The results obtained from both restricted and unrestricted bootstrap were already expected. As discussed by \cite{cavaliere2016}, it is well-known that the unrestricted (classical) bootstrap does not apply when one parameter belongs to the boundary of the parameter space. This is indeed the case where the null hypothesis is that $H_0: \alpha_1=\alpha_2=0$, both parameters belong to the boundary space. The restricted bootstrap proposed by  \cite{cavaliere2016} overcomes this issue and provides valid results as theoretically proven in their paper and empirically illustrated here for DINGARCH models.

\section{Measles Data Analysis}\label{sec:application}

The linear negative binomial tv-DINGARCH model is now applied to modeling the weekly number of reported measles infections in North Rhine-Westphalia, Germany. The series is observed between January 2001 and May 2013 (646 observations). These data are publicly available in the \texttt{R} package \texttt{tscount}.  Figure \ref{fig:measles_plot} displays $\{Y_t \}_{t=1}^{646}$ on the left, and the series autocorrelation function on the right.

\begin{figure}[ht!]
\centering
\includegraphics[width = 1\linewidth]{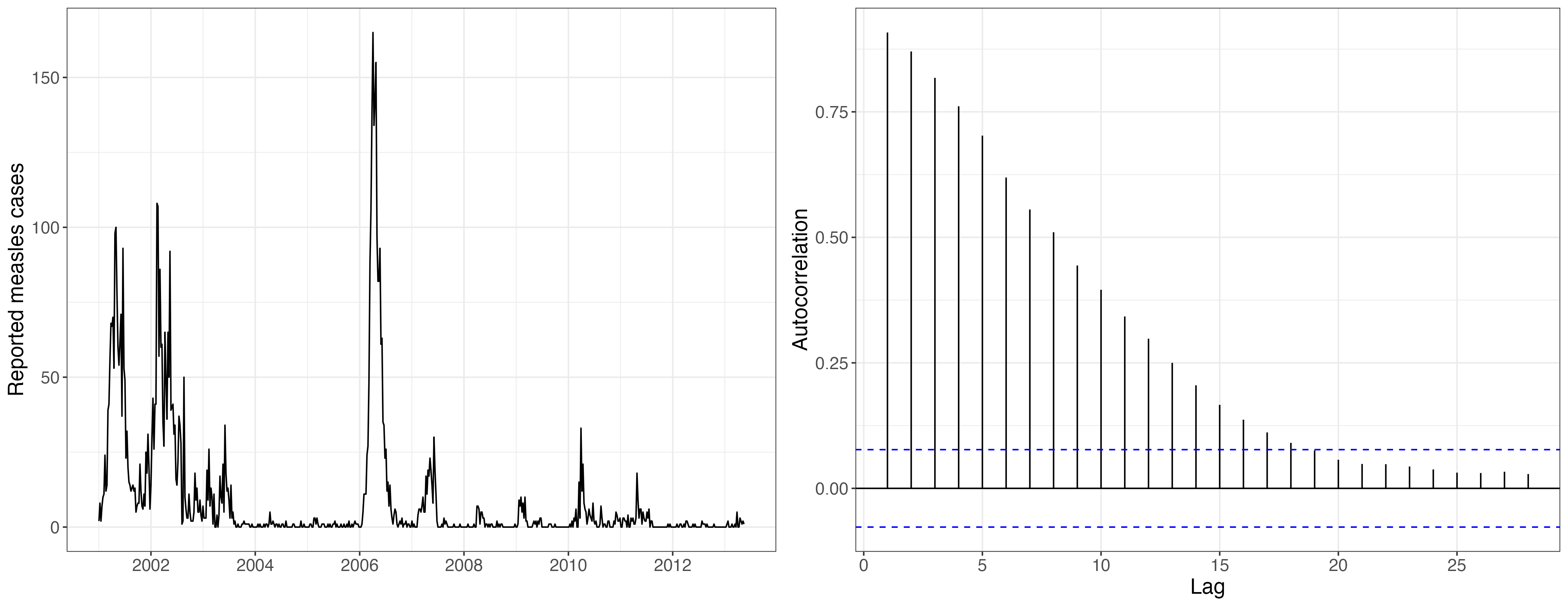}
\caption{On the left, weekly cases of measles reported in North Rhine-Westphalia, Germany, between January 2001 and May 2013. The autocorrelation function of the series is shown on the right. }\label{fig:measles_plot}
\end{figure}

The fitting of the NB tv-DINGARCH model will be compared to the fit of the ordinary NB INGARCH model with an identical mean time series structure. The purpose of this exercise is to assess how a time-varying dispersion changes the model adequacy to this real data. To this end, we will evaluate goodness-of-fit and predictions for each model, i.e. $\mathcal{M}_{tv}$ will denote the NB tv-DINGARCH model while $\mathcal{M}_{ord}$ denotes the ordinary case (which assumes a constant dispersion).

Results for $\mathcal{M}_{tv}$ and $\mathcal{M}_{ord}$ models are summarized by Table \ref{tab:cmles}. Conditional maximum likelihood estimates of the parameters, their standard errors (SE) in parenthesis, and associated approximate 95\% confidence intervals are provided. Uncertainty quantification relies on 500 replications of parametric bootstrap for both $\mathcal{M}_{tv}$ and $\mathcal{M}_{ord}$.  Table \ref{tab:cmles} shows that $\mathcal{M}_{ord}$ and $\mathcal{M}_{tv}$ are in close  agreement concerning the INGARCH mean structure as $\beta_0$, $\beta_1$ and $\beta_2$ assume similar values. To compare between models, we compute model information criteria and perform the constant dispersion test developed in Section \ref{sec:test}. The values of AIC and BIC are respectively $2670.568$ and $2697.393$ for $\mathcal{M}_{tv}$ and $2797.216$ and $2815.099$ for $\mathcal{M}_{ord}$, both supporting time-varying dispersion.  The likelihood ratio test for $H_0: \alpha_1 = \alpha_2 = 0$ versus $H_1$: \textit{$\alpha_1\neq0$ or $\alpha_2\neq0$} is in line with the model choice conclusion, returning high evidence ($\mbox{p-value} < 10^{-5}$) in favor of $\mathcal{M}_{tv}$ over $\mathcal{M}_{ord}$.

\begin{table}
\centering
\begin{tabular}{ccccc} \hline 
\multirow{2}{*}{Parameter} & \multicolumn{2}{c}{tv-DINGARCH}     & \multicolumn{2}{c}{INGARCH}             \\
                           & Estimate (se)    & 95\% CI          & Estimate (se)    & 95\% CI              \\  \hline 
$\beta_0$                  & 0.259 (0.048) & (0.164, 0.353) & 0.194 (0.127)   & (0.000, 0.444) \\
$\beta_1$                  & 0.579 (0.034)    & (0.512, 0.645)  &  0.583 (0.090)    & (0.407, 0.759)    \\
$\beta_2$                  & 0.342 (0.039)    & (0.265, 0.419)   &  0.390 (0.094)   & (0.205, 0.574)    \\
\hline
$\alpha_0$           & 0.775 (0.057)          & (0.663, 0.886)  & 0.736 (0.110) & (0.521, 0.952)    \\
$\alpha_1$           & 0.079 (0.007)         & (0.065, 0.093) & -- & --                                         \\
$\alpha_2$           & 0.000 (0.010)         & (0.000, 0.020)  & -- & --                                                           \\ 
\hline
\end{tabular}
	\caption{Conditional maximum likelihood estimates, standard errors (se), and 95\% confidence intervals for $\mathcal{M}_{tv}$ and $\mathcal{M}_{ord}$ models applied to weekly counts of measles in Germany.}\label{tab:cmles}
\end{table}

In addition, we consider Probability Integral Transform (PIT) plots \citep{czaetal2009}, an approach that enables the comparison of count time series  models via their predictive distributions. A model providing a good fit to the data in this aspect will render a PIT plot resembling a uniform distribution, where major deviations typically indicate problems of overdispersion or underdispersion of the model's predictive distribution. These graphs are shown in Figure \ref{pit_models_noint_IE} for both models and while the plot  for  $\mathcal{M}_{tv}$ resembles uniformity, as desired, the upside-down U-shaped PIT of $\mathcal{M}_{ord}$ indicates a non adequate fit.

\begin{figure}[ht!]
\centering
\includegraphics[width = 1\linewidth]{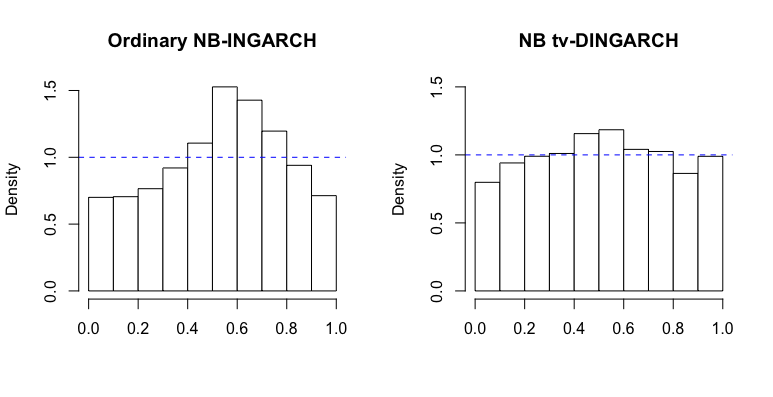}
\caption{PIT plots for $\mathcal{M}_{ord}$ (left) and $\mathcal{M}_{tv}$ (right).}\label{pit_models_noint_IE}
\end{figure}

It is often of interest to practitioners to select between models according to their forecasting power. To that end, we consider a forecasting exercise that explores one-step-ahead (OSA) prediction from $\mathcal{M}_{tv}$ and $\mathcal{M}_{ord}$. This is done in a recursive manner to provide multiple OSA predictions from each model. We start by defining the initial training data from the beginning of the study until October 2004 (week 4) which contains 200 observations. Both models are fitted to the training set and used to predict the next week's counts (week 1 of November 2004) via the conditional median and mode of the distributions. The conditional (or predictive) distributions are $\mbox{NB}(\hat{\mu}_{t+1}, \hat{\phi}_{t+1})$ and $\mbox{NB}(\hat{\mu}_{t+1}, \hat{\phi})$ respectively for $\mathcal{M}_{tv}$ and $\mathcal{M}_{ord}$ and $t =200$ at this step. Once the prediction is obtained, we add week 1 of November 2004 to the training set, refit both models, and gather the new OSA predictions. Proceeding until the end of the study period gives a total of 446 predictions from each model. A pseudocode describing the steps to this prediction exercise is given in Algorithm \ref{Alg:OSA}.

\begin{algorithm}[ht!]
	\DontPrintSemicolon
	\SetAlgoLined
	\SetNoFillComment
	\LinesNotNumbered 
	\KwIn{$\boldsymbol{Y}$ observed count trajectory of length $n$ \\
		
		$n_0$ starting point of prediction exercise}
	
	\vspace{0.2cm}
	
	0. $s \leftarrow n_0$;
	
	\While{$s < n$}{
		
		\vspace{0.5cm}
		
		1. $\boldsymbol{Y}(s) \leftarrow Y_{[1:s]}$; \tcp*[l]{train data}
		
		2. Fit the NB tv-DINGARCH model to $\boldsymbol{Y}(s)$;
		
		3. From 2, put together the CMLEs $\widehat{\boldsymbol{\theta}}(s)$ and the fitted  $\widehat{\boldsymbol{\lambda}}(s)$ and  $\widehat{\boldsymbol{\phi}}(s)$ of step $s$;
		
		\vspace{0.5cm}
		
		4. Obtain the OSA mean $\widehat{\mu}_{s+1} = \widehat{\beta}_0^s + \widehat{\beta}_1^s Y_{s}  +   \widehat{\beta}_2^s \widehat{\mu}_s$ and $\widehat{\phi}_{s+1} = \widehat{\alpha}_0^s + \widehat{\alpha}_1^s Y_{s} +   \widehat{\alpha}_2^s \widehat{\phi}_s$;
		
		\vspace{0.5cm}
		
		5A) Compute the mode of the $\mbox{NB}( \widehat{\mu}_{s+1},\widehat{\phi}_{s+1})$ distribution and denote it by  $\widehat{Y}_{s+1}$;\tcp*[l]{prediction via mode}
		
		5B) Compute the median of the $\mbox{NB}( \widehat{\mu}_{s+1},\widehat{\phi}_{s+1})$ distribution and denote it by  $\widehat{Y}_{s+1}$;\tcp*[l]{prediction via median}		
		\vspace{0.5cm}		
		$s = s+1$;  \tcp*[l]{increment step}
	}
	\textbf{Output:} $\boldsymbol{\widehat{Y}}$ vector of $(n-n_0)$ OSA predictions.
\caption{Recursive algorithm for obtaining $\boldsymbol{\widehat{Y}}$, the one-step-ahead (OSA) predicted values of $\boldsymbol{Y}_{[n_0+1:n]}$. The training data at iteration $s$, $\boldsymbol{Y}(s)$, is incremented and the model is refitted to obtain the OSA forecast $\widehat{Y}_{s+1}$. Steps 5A) and 5B) provide alternative prediction methods via the mode and median.}\label{Alg:OSA}
\end{algorithm}

Algorithm \ref{Alg:OSA} is presented in terms of the $\mathcal{M}_{tv}$ model but applies  similarly for the case of $\mathcal{M}_{ord}$. In this case, $\mathcal{M}_{ord}$ is fitted in step two, and the predictions in 5A) and 5B) take in the (fixed) dispersion parameter estimated in step 2. By employing the Algorithm with $\mathcal{M}_{tv}$ and $\mathcal{M}_{ord}$, their predictions are summarized via the root mean square forecasting error (RMSFE). Let $n_0$ denote the time point chosen to start the prediction exercise; in this case $n_0=200$. The RMSFE of the forecasting step $t$ is 
$$
\mbox{RMSFE}_{t} = \sqrt{\frac{1}{t-n_0}\sum_{s=n_0+1}^{t}(Y_s - \widehat{Y}_s)^2},
$$ 
where $Y_s$ and $\widehat{Y}_s$,  
are the observed and predicted counts, respectively.  Calculation  of RMSFE, from the 446 total predictions from $\mathcal{M}_{tv}$ and $\mathcal{M}_{ord}$, yields   Figure \ref{MSFE_figure}. On the left, $\widehat{Y}_s$ is the mode of the conditional distribution and on the right, it is taken as the median. Prediction by the median yields a smaller prediction error in comparison to the mode, and in both cases the tv-DINGARCH model produces the smallest RMSFE values. 
\begin{figure}[ht!]
	\centering
	\includegraphics[width = 1\linewidth]{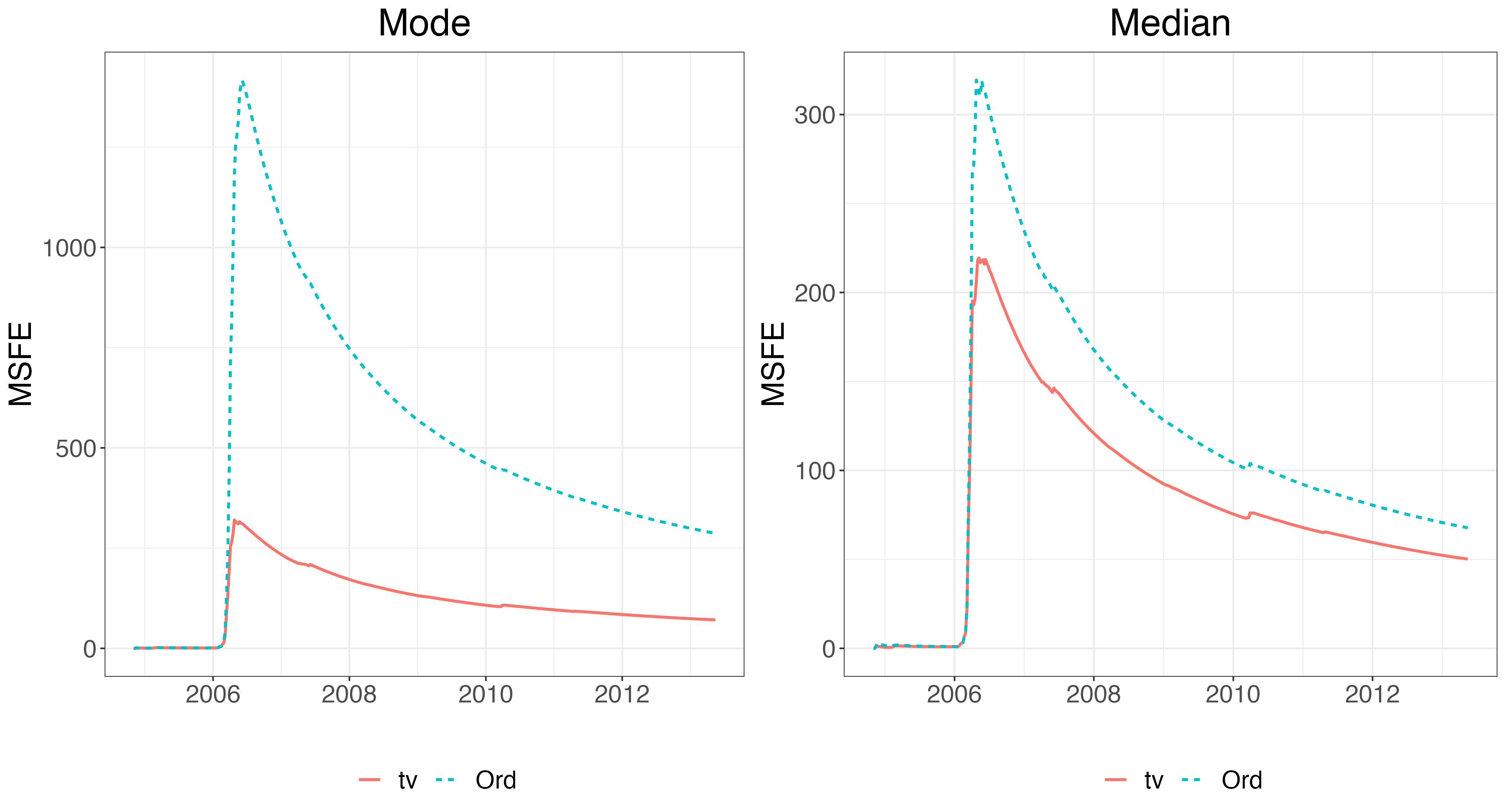}
	\caption{RMSFE of predictions obtained with the fit of ordinary (dashed lines) and time-varying dispersion (solid lines) INGARCH models to the weekly count of measles in North Rhine-Westphalia, Germany. On the left, predicted values are the mode of the predictive distributions, whereas the median is taken on the right.}\label{MSFE_figure}
\end{figure}

The analysis  of the weekly number of measles cases in North Rhine-Westphalia, Germany, demonstrated that the time-varying dispersion is a promising and important extension of the ordinary INGARCH processes that achieved improvement in terms of goodness-of-fit and forecasting.

\section{Conclusions}

We proposed a class of time-varying dispersion INGARCH (tv-DINGARCH) models and explored stochastic properties such as stationarity and ergodicity. Estimation of parameters was addressed through conditional maximum likelihood estimation (CMLE) and its associated asymptotic theory was established. Monte Carlo simulations were conducted to evaluate the performance of the CMLE. Moreover, we developed bootstrap methodologies to test for constant dispersion and showed via simulated studies that the restricted bootstrap is preferred over the unrestricted parametric one. We analyzed the weekly number of reported measles cases in North Rhine-Westphalia, Germany, from January 2001 to May 2013, and found that the tv-DINGARCH approach delivers much better results regarding goodness-of-fit and prediction when compared to the ordinary INGARCH model.

A log-linear version of our model allowing for the inclusion of covariates deserves future research. A first-order log-linear tv-DINGARCH process $\{Y_t\}_{t\geq1}$ allowing for covariates/exogenous time series is given as in Definition \ref{def:dingarch} with $\lambda_t\equiv\exp(\mu_t)$ and $\phi_t\equiv\exp(\nu_t)$, where
	\begin{eqnarray*}\label{struct_loglinear}
		\mu_t=\beta_0+\beta_1\log(Y_{t-1}+1)+\beta_2\mu_{t-1}+\boldsymbol\delta^\top{\bf X}_t,\quad	
		\nu_t=\alpha_0+\alpha_1\log(Y_{t-1}+1)+\alpha_2\nu_{t-1}+\boldsymbol\gamma^\top{\bf W}_t,
	\end{eqnarray*}	
$\beta_i,\alpha_i\in\mathbb R$ for $i=0,1,2$, with ${\bf X}_t$ and ${\bf W}_t$ covariates with associated real-valued coefficients $\boldsymbol\delta$ and $\boldsymbol\gamma$.  The ordinary log-linear INGARCH model \citep{foktjo2011} is obtained as a particular case from the log-linear tv-DINGARCH process by taking $\alpha_1=\alpha_2=0$ and $\boldsymbol\gamma=0$. This topic will be investigated in a future communication. Finally, it is worth mentioning that other mixed Poisson distributions rather than negative binomial can be used for our tv-DINGARCH formulation such as Poisson-inverse Gaussian distribution.

\renewcommand{\theequation}{A-\arabic{equation}}
\renewcommand{\theLemma}{A-\arabic{Lemma}}
\setcounter{equation}{0}
\setcounter{Lemma}{0}
\renewcommand{\thesubsection}{A-\arabic{subsection}}
\setcounter{subsection}{0}
\section*{Appendix}

\subsection{Proof of Theorem \ref{Thm_StatErg}}

 The key ingredient to establish the desired result is to prove that $\{\boldsymbol\xi_t\}_{t\in\mathbb Z}$ is an e-chain, that is, for any continuous function $w$ with compact support on $(0,\infty)^2$ and for every $\epsilon>0$, there exists $\eta>0$ such that, for ${\bf x},{\bf z}\in(0,\infty)^2$, $\|{\bf x}-{\bf z}\|<\eta$ implies that $\left|E(w(\boldsymbol\xi_k)|\boldsymbol\xi_0={\bf x})-E(w(\boldsymbol\xi_k)|\boldsymbol\xi_0={\bf z})\right|<\epsilon$ $\forall\, k\geq1$, where $\|\cdot\|$ is some norm.
 
We follow the arguments of  \cite{liu2012} and show explicitly the steps required for the proof based on model  \eqref{DINGARCH}; further details can be proved along the lines of  \cite{liu2012} and are omitted.

By working in the same manner as in \citet[pp.106]{liu2012} , we can prove that $\{\boldsymbol\xi_t\}_{t\in\mathbb Z}$ is a weak Feller chain and bounded in probability. Therefore, $\{\boldsymbol\xi_t\}_{t\in\mathbb Z}$ has at least one stationary distribution.
 
 Let $w$ be a continuous function with compact support on $(0,\infty)^2$ and assume, without loss of generality,   that it is   
 bounded by 1, i.e.  $|w|<1$. Consider $k=1$, ${\bf x}=(x_1,x_2),{\bf z}=(z_1,z_2)\in(0,\infty)^2$, and $\epsilon>0$ arbitrary. Denote by $f^{ mp}_{\boldsymbol\xi;\mathcal D}(\cdot)$ the probability function of a mixed Poisson distribution with mean $\lambda$, variance $\lambda+\lambda^2/\phi$, and mixing distribution $\mathcal D$, where $\boldsymbol\xi=(\lambda,\phi)^\top$. It follows that	
 \begin{eqnarray}
 	&&		\left|E(w(\boldsymbol\xi_1)|\boldsymbol\xi_0={\bf x})-E(w(\boldsymbol\xi_1)|\boldsymbol\xi_0={\bf z})\right|\leq\nonumber\\
 	&&		\sum_{y=0}^\infty|w(\boldsymbol\tau+{\bf B}(y,y)^\top+{\bf A}\boldsymbol{\bf x})f^{mp}_{\bf x; \mathcal{D}}(y)-w(\boldsymbol\tau+{\bf B}(y,y)^\top+{\bf A}\boldsymbol{\bf z})f^{mp}_{\bf z; \mathcal{D}}(y)|\leq\nonumber\\
 	&&		\sum_{y=0}^\infty f^{mp}_{\bf x; \mathcal{D}}(y)|w(\boldsymbol\tau+{\bf B}(y,y)^\top+{\bf A}\boldsymbol{\bf x})-w(\boldsymbol\tau+{\bf B}(y,y)^\top+{\bf A}\boldsymbol{\bf z})|+\label{term_I}\\
 	&&	\sum_{y=0}^\infty |w(\boldsymbol\tau+{\bf B}(y,y)^\top+{\bf A}\boldsymbol{\bf z})|\,|f^{mp}_{\bf x; \mathcal{D}}(y)-f^{mp}_{\bf z; \mathcal{D}}(y)|.\label{term_II}
 \end{eqnarray} 		
 
 We now find an upper bound for the term (\ref{term_II}). Denote by $f^{pois}_{\lambda}(\cdot)$ the probability function of a Poisson distribution with mean $\lambda$. From the mixed Poisson stochastic representation, we have that $f^{mp}_{\bf x; \mathcal{D}}(y)=E(f^{pois}_{x_1Z_1}(y))$, where $Z_1$ is the associated latent random variables with distribution depending on $x_2$. Similar representation holds for $f^{mp}_{\bf z; \mathcal{D}}(y)$ in terms of an associated latent factor $Z_2$ with distribution depending on $z_2$. Using this and recalling the fact that $|\omega|<1$, we obtain that	
 \begin{eqnarray}\label{term_II_aux}	
 	\sum_{y=0}^\infty |w(\boldsymbol\tau+{\bf B}(y,y)^\top+{\bf A}\boldsymbol{\bf z})|\,|f^{mp}_{\bf x; \mathcal{D}}(y)-f^{mp}_{\bf z; \mathcal{D}}(y)|\leq \sum_{y=0}^\infty |f^{mp}_{\bf x}(y)-f^{mp}_{\bf z}(y)|=\nonumber\\
 	\sum_{y=0}^\infty |E(f^{pois}_{x_1Z_1}(y))-E(f^{pois}_{z_1Z_2}(y))|\leq E\left(\sum_{y=0}^\infty|f^{pois}_{x_1Z_1}(y)-f^{pois}_{z_1Z_2}(y)|\right).
 \end{eqnarray}
 
By using inequality (A.1) from \cite{wanetal2014} (see also \cite{liu2012}), we have that $\sum_{y=0}^\infty|f^{pois}_{x_1Z_1}(y)-f^{pois}_{z_1Z_2}(y)|\leq 2\left(1-\exp\{-|x_1Z_1-z_1Z_2|\}\right)$. Hence, (\ref{term_II_aux}) is bounded above by 
\begin{eqnarray*}
	&& E\left(\sum_{y=0}^\infty|f^{pois}_{x_1Z_1}(y)-f^{pois}_{z_1Z_2}(y)|\right)\leq 2E\left(1-e^{-|x_1Z_1-z_1Z_2|}\right)\leq\\
	&& 2E\left(1-e^{-|x_1-z_1|(Z_1+Z_2)}\right)\leq  2E\left(|x_1-z_1|(Z_1+Z_2)\right)=2|x_1-z_1|\leq 2\|{\bf x}-{\bf z}\|_p,
\end{eqnarray*}
for $p\in[1,\infty]$, where we have used in the second inequality the fact that $|ab-cd|\leq|a-c|(b+d)$ for $a,b,c,d>0$. In the third inequality, we used that $1-e^{-x}\leq x$ for all $x>0$. Finally, the fourth inequality follows, for instance, by \citep[pp. 108]{liu2012}.

The upper bound for the term given in (\ref{term_I}) follows exactly as discussed in \cite{liu2012}: choose $\epsilon'>0$ and $\eta>0$ small enough such as $\epsilon'+\dfrac{8\eta}{1-\|{\bf A}\|_p}<\epsilon$ and $\|{\bf x}-{\bf z}\|_p<\eta$ implies $|w({\bf x})-w({\bf z})|<\epsilon'$, with $p\in[1,\infty]$. In this part, we are using that $\|{\bf A}\|_p<1$ for some $p\in[1,\infty]$.

By combining the above results, we obtain that $\left|E(w(\boldsymbol\xi_1)|\boldsymbol\xi_0={\bf x})-E(w(\boldsymbol\xi_1)|\boldsymbol\xi_0={\bf z})\right|<\epsilon'+2\|{\bf x}-{\bf z}\|_p$. For general $k\geq 2$, the result follows by using mathematical induction, exactly as done in Chapter 4 of \cite{liu2012}, and therefore it is omitted. With the e-chain property established for the bivariate process $\{\boldsymbol\xi_t\}_{t\in\mathbb Z}$, we show part (b) by following similar arguments of the proofs of Proposition 4.2.1(b) and Proposition 4.3.1 of \cite{liu2012} to establish the geometric moment contraction (GMC) of $\{\boldsymbol\xi_t\}_{t\in\mathbb Z}$.

Denote by $(Y_t,\boldsymbol\xi_t)$ and $(Y_t^{(0)},\boldsymbol\xi_t^{(0)})$ the process under the initial value $\boldsymbol\xi_0=\boldsymbol\xi=(\lambda,\phi)^\top$ and $\boldsymbol\xi_0=\boldsymbol\xi^{(0)}=(\lambda^{(0)},\phi^{(0)})^\top$, respectively. It follows that
\begin{eqnarray*}
E\left(\|\boldsymbol\xi_1-\boldsymbol\xi_1^{(0)}\|_p\right)&=&E\left(\|(\boldsymbol\tau+{\bf A}\boldsymbol\xi+{\bf B}(Y_0,Y_0)^\top)-(\boldsymbol\tau+{\bf A}\boldsymbol\xi^{(0)}+{\bf B}(Y_0^{(0)},Y_0^{(0)})^\top)\|_p\right)\\
&\leq&\|{\bf A}\|_p\|\boldsymbol\xi-\boldsymbol\xi^{(0)}\|_p+E\left(\|{\bf B}(Y_0-Y_0^{(0)},Y_0-Y_0^{(0)})^\top\|_p\right)\\
&=&\|{\bf A}\|_p\|\boldsymbol\xi-\boldsymbol\xi^{(0)}\|_p+\|{\bf B}\|_pE\left(|Y_0-Y_0^{(0)}|\right).
\end{eqnarray*}
Further, we have that $E(|Y_0-Y_0^{(0)}|)=|\lambda-\lambda^{(0)}|$; for instance, see proof of Thm 1 by \cite{chrfok2014}.
Hence,
\begin{eqnarray*}
E\left(\|\boldsymbol\xi_1-\boldsymbol\xi_1^{(0)}\|_p\right)&\leq&\|{\bf A}\|_p\|\boldsymbol\xi-\boldsymbol\xi^{(0)}\|_p+\|{\bf B}\|_p|\lambda-\lambda^{(0)}|\\
&\leq&\|{\bf A}\|_p\|\boldsymbol\xi-\boldsymbol\xi^{(0)}\|_p+\|{\bf B}\|_p\|\boldsymbol\xi-\boldsymbol\xi^{(0)}\|_1\\
&\leq&\|{\bf A}\|_p\|\boldsymbol\xi-\boldsymbol\xi^{(0)}\|_p+2^{1-1/p}\|{\bf B}\|_p\|\boldsymbol\xi-\boldsymbol\xi^{(0)}\|_p\\
&\leq&(\|{\bf A}\|_p+2^{1-1/p}\|{\bf B}\|_p)\|\boldsymbol\xi-\boldsymbol\xi^{(0)}\|_p,
\end{eqnarray*}
where we have used that $\|\boldsymbol\xi-\boldsymbol\xi^{(0)}\|_1\leq2^{1-1/p}\|\boldsymbol\xi-\boldsymbol\xi^{(0)}\|_p$. Since $\|{\bf A}\|_p+2^{1-1/p}\|{\bf B}\|_p<1$, the GMC is satisfied, which concludes the proof. $\square$

\subsection{Proof of Theorem \ref{moments_P-INGARCH}}

Let $\{Y_t\}_{t\in\mathbb N}$ be a first-order negative binomial tv-DINGARCH process. Under condition $\|{\bf A}\|_1+\|{\bf B}\|_1<1$, Theorem \ref{Thm_StatErg} holds (for $p=1$) and gives us that $\{ \boldsymbol{\xi} _t\}_{t\in\mathbb Z}$ has a stationary and ergodic solution. Moreover, this process is a Markov chain. Therefore, we need to show  that Tweedie's criterion (for instance, see Chapter 15 from  \cite{meytwe1993}) is satisfied for an appropriate test function to prove the finiteness of arbitrary moments.
	
	Consider the test function $V(x,z)=1+x^k+z^k$, for $x,z>0$ and arbitrary $k\in\mathbb N$. It follows that
	\begin{eqnarray}\label{tweedie}
		E\left(V(\lambda_t,\phi_t)|\lambda_{t-1}=\lambda, \phi_{t-1}=\phi\right)&=&1+E\left(\lambda_t^k|\lambda_{t-1}=\lambda,\phi_{t-1}=\phi\right)+E\left(\phi_t^k|\lambda_{t-1}=\lambda,\phi_{t-1}=\phi\right)\nonumber\\
		&=&1+E\left((\beta_0+\beta_1Y+\beta_2\lambda)^k\right)+E\left((\alpha_0+\alpha_1Y+\alpha_2\phi)^k\right),			
	\end{eqnarray}	
	where $Y\sim\mbox{NB}(\lambda,\phi)$. 	 From proof of Lemma (A.1) from \cite{chrfok2014}, we have that the $d$-th moment of $Y\sim\mbox{NB}(\lambda,\phi)$, for $d\geq 2$, satisfies the recursive equation
	\begin{eqnarray}\label{NB_moments}
		E(Y^d)=\lambda\left\{1+\sum_{j=1}^{d-1}\left[\binom{d-1}{j}+\dfrac{1}{\phi} \binom{d-1}{j-1}\right]E(Y^j)\right\},
	\end{eqnarray}	
	with $E(Y)=\lambda$, which yields that $E(Y^d)=\lambda^d(1+1/\phi)+O(\lambda^{d-1})$ as $\lambda\rightarrow\infty$. Using this result, it follows that
	\begin{eqnarray*} 
		E\left((\beta_0+\beta_1Y+\beta_2\lambda)^k\right)&=&\sum_{j=0}^k\binom{k}{j}(\beta_0+\beta_2\lambda)^{k-j}\beta_1^jE(Y^j)=\sum_{j=0}^k\binom{k}{j}(\beta_2\lambda)^{k-j}\beta_1^j\lambda^j+O(\lambda^{k-1})\\
		&=&\lambda^k(\beta_1+\beta_2)^k+O(\lambda^{k-1}).
	\end{eqnarray*}
	
	Similarly, we obtain that
	\begin{eqnarray*} 
		E\left((\alpha_0+\alpha_1Y+\alpha_2\phi)^k\right)=(\alpha_1\lambda+\alpha_2\phi)^k+O(\lambda^{k-1}).
	\end{eqnarray*}
Hence,  (\ref{tweedie})  becomes 
	\begin{eqnarray}\label{tweedie2}
		E\left(V(\lambda_t,\phi_t)|\lambda_{t-1}=\lambda, \phi_{t-1}=\phi\right)=V(\lambda,\phi)\dfrac{1+\lambda^k(\beta_1+\beta_2)^k+(\alpha_1\lambda+\alpha_2\phi)^k+O(\lambda^{k-1})}{1+\lambda^k+\phi^k}.			
	\end{eqnarray}	
	
	We will now analyze the behavior of the terms multiplying $V(\lambda,\phi)$ in (\ref{tweedie2}) for large values of $\lambda$ and $\phi$. First, note that $\lambda^*=\dfrac{\beta_0}{1-\beta_2}$ and $\phi^*=\dfrac{\alpha_0}{1-\alpha_2}$ are the smallest reachable points of $\lambda_t$ and $\phi_t$, respectively, for all $t$. Hence, we consider $\lambda\in[\lambda^*,\infty)$ and $\phi\in[\phi^*,\infty)$, both bounded away from zero. Consider the polar coordinates transformation $\lambda=r\cos\delta$ and $\phi=r\sin\delta$, for $r>0$ and $\delta\in(0,\pi/2)$ such that $\lambda\geq\lambda^*$ and $\phi\geq\phi^*$. In terms of polar coordinates, we have that 
	\begin{eqnarray*}
		\dfrac{1+\lambda^k(\beta_1+\beta_2)^k+(\alpha_1\lambda+\alpha_2\phi)^k}{1+\lambda^k+\phi^k}&=&\dfrac{1+r^k[(\beta_1+\beta_2)^k\cos^k\delta+(\alpha_1\cos\delta+\alpha_2\sin\delta)^k]}{1+r^k(\cos^k\delta+\sin^k\delta)}\\
		&\longrightarrow&\dfrac{(\beta_1+\beta_2)^k\cos^k\delta+(\alpha_1\cos\delta+\alpha_2\sin\delta)^k}{\cos^k\delta+\sin^k\delta}\\
		&\leq&\dfrac{(\beta_1+\beta_2)^k\cos^k\delta+\max\{\cos\delta,\sin\delta\}(\alpha_1+\alpha_2)^k}{\cos^k\delta+\sin^k\delta}\\
		&\leq&(\beta_1+\beta_2)^k+(\alpha_1+\alpha_2)^k\\
		&\leq&\beta_1+\beta_2+\alpha_1+\alpha_2\\
		&=&\|{\bf A}\|_1+\|{\bf B}\|_1<1,
	\end{eqnarray*}	
	as $r\rightarrow\infty$, where the last inequality follows from lemma's assumption implying that $\beta_1+\beta_2<1$ and $\alpha_1+\alpha_2<1$. Furthermore, the remaining term converges to 0 as $r\rightarrow\infty$ because  
	$\dfrac{O(\lambda^{k-1})}{1+\lambda^k+\phi^k}=O(r^{k-1})/O(r^{k})=O(r^{-1})$.
	
	Therefore, there exist real constants $\kappa_1\in(0,1)$, $\kappa_2>0$, and $L>0$ such that (\ref{tweedie}) is bounded from above as follows:
	\begin{eqnarray*}
		E\left(V(\lambda_t,\phi_t)|\lambda_{t-1}=\lambda, \phi_{t-1}=\phi\right)\leq(1-\kappa_1) V(\lambda,\phi)+\kappa_2I\{(\lambda,\phi)\in G\},
	\end{eqnarray*}	
	where $G=\{\lambda\geq\lambda^*,\phi\geq\phi^*:\, \lambda=r\cos\delta,\, \phi=r\sin\delta,\, 0<r<L,\, \delta\in(0,\pi/2)\}$.
	
	In other words, Tweedie's criterion is satisfied and we conclude that the $k$-th moment of $\lambda_t$ and $\phi_t$ are finite for arbitrary $k\in\mathbb N$. Now, using the fact that $Y_t|\mathcal F_{t-1}\sim\mbox{NB}(\lambda_t,\phi_t)$ and Eq. (\ref{NB_moments}), we have that the $k$-th conditional moment of $Y_t$ given $\mathcal F_{t-1}$ is a polynomial on $\lambda_t$ and $1/\phi_t$; note that $1/\phi_t\leq1/\alpha_0$ for all $t$. Since the $l$-th moment of $\lambda_t$ is finite for arbitrary $l\in\mathbb N$, we obtain that the unconditional $k$-th moment of $Y_t$ is finite for all $k\geq1$ as well. $\square$

\begin{Lemma} \label{lemma:starting value}
Under the conditions of Theorem  \ref{asymptotics}
$$
\sup\limits_{\boldsymbol{\theta}\in\mathcal{B}}\Big|\frac{1}{n}\ell(\boldsymbol{\theta})-\frac{1}{n}\tilde{\ell}(\boldsymbol{\theta})\Big|\overset{a.s.}
\longrightarrow 0~~~~\mbox{as}~n\rightarrow\infty
$$
following  the notation from Subsection \ref{subsec:estimation}.
\end{Lemma}

\begin{proof}
		Let  $\boldsymbol\xi_t=(\lambda_t,\phi_t)^\top$ denote the stationary and ergodic solution of model \eqref{DINGARCH}. In addition, let 
		$\tilde{\boldsymbol{\xi}}_t=( \tilde{\lambda}_t,\tilde{\phi}_t)^\top$ the process obtained with some starting value 
		$\tilde{\boldsymbol{\xi}}_0=( \tilde{\lambda}_0,\tilde{\phi}_0)^\top$. Then, arguing as in \citet[Lemma A.2]{chrfok2014} and using 
		that   $\|{\bf A}\|_1<1$. we obtain that 
		$$
		\sup\limits_{\boldsymbol{\theta}\in\mathcal{B}}\|\boldsymbol{\xi}_t-\tilde{\boldsymbol{\xi}}_t\|_{1} \leq C \|{\bf A}\|^t_1,~~~ \forall ~t,
		$$
		almost surely from the compactness of $\mathcal{B}$, for some constant $C > 0$. Now,  consider the differences $\ell_{t}(\boldsymbol{\theta})-\tilde{\ell}_{t}(\boldsymbol{\theta})$
		following  the notation introduced by \eqref{loglikelihood} and \eqref{loglikelihood using the starting value}.  By grouping terms we obtain the following inequalities using repeatedly the facts that for any $x,y>0$ it holds that $|\log (x/y)|\leq |x-y|/\min(x,y)$, $ \lambda_{t} > \beta_{0}$, $\phi_{t} > \alpha_0$ and $\log x < x-1$, for all $x >0$.
		\begin{align*}
			\bigg|
			y_t \Bigl[\log\lambda_t-\log(\lambda_t+\phi_t) \Bigr]- y_t\Bigr[
			\log\tilde{\lambda}_{t}-\log(\tilde{\lambda}_t+\tilde{\phi}_t) \Bigl]\bigg|& \leq  2y_{t} \frac{\| \xi_t- \tilde{\xi}_{t}\|_{1}}{\beta_{0}},
		\end{align*}
		\begin{align*}
			\bigg|
			\phi_t \Bigl[\log\phi_t-\log(\lambda_t+\phi_t) \Bigr]- \tilde{\phi}_t\Bigr[
			\log\tilde{\phi}_{t}-\log(\tilde{\lambda}_t+\tilde{\phi}_t) \Bigl]\bigg|& \leq 
			\| \xi_{t}- \tilde{\xi}_t \|_{1} \Bigl\{ \Bigl[   (\frac{2}{\alpha_0}+1) +\phi_{t} + \lambda_{t}+2 \| \xi_{t}-\tilde{\xi}_t\|_{1} \Bigr]\Bigr\},
		\end{align*}
		\begin{align*}
			\Bigr|	\log\Gamma(y_t+\phi_t)- 	\log\Gamma(y_t+\tilde{\phi}_t) \Bigl| \leq (\frac{2}{\alpha_0}+1)(y_t+\phi_t +1)
			\| \xi_{t}-\tilde{\xi}_t\|_{1},
		\end{align*}
		\begin{align*}
			\Bigr|	\log\Gamma(\phi_t)- 	\log\Gamma(\tilde{\phi}_t) \Bigl| \leq (\frac{2}{\alpha_0}+1)(\phi_t +1)
			\| \xi_{t}-\tilde{\xi}_t\|_{1}.
		\end{align*}
		For the last two inequalities, we employ Stirling's approximation to the Gamma function. In conclusion 
		\begin{eqnarray*}
			\sup\limits_{\boldsymbol{\theta}\in\mathcal{B}}\Big|\frac{1}{n}\ell(\boldsymbol{\theta})-\frac{1}{n}\tilde{\ell}(\boldsymbol{\theta})\Big|	
			&\leq& \frac{C}{n}\sum_{t=1}^{n} \| \boldsymbol{A} \|^{t}_1 W_t,
		\end{eqnarray*}
		where $W_t$ is a positive linear combination of $(Y_t, \lambda_t, \phi_t)$. Theorem  \ref{moments_P-INGARCH} implies that  $E|W_{t}|^k < \infty$, for all $k \in \mathbb{N}$. 
		\noindent
But using Markov$^\prime$s inequality for every $\epsilon>0$ we have that
$$
\sum_{t=1}^{\infty}P\Big(\| \boldsymbol{A} \|^{t}_1 W_t >\epsilon\Big) \leq \displaystyle 
\sum_{t=1}^{\infty}\frac{\| \boldsymbol{A} \|^{kt}_1 E(W_t^k)}{\epsilon^k}<\infty.
$$
\noindent
Hence, $ \| \boldsymbol{A} \|^{t}_1 W_t\overset{a.s.}\longrightarrow 0$, as $t\rightarrow \infty$ and therefore the proof is concluded.
\end{proof}

\subsection{Proof of Lemma \ref{lemma:integrability}}

We have that $|S_{1t}|\leq|y_t/\lambda_t-1|\leq y_t/\beta_0+1$. Furthermore, $\partial\lambda_t/\partial\beta_i$ is a linear combination of $y_{t-1},\ldots,y_1$ and $\lambda_1$, for $i=0,1,2$. Using these results and under condition $\|{\bf A}\|_1+\|{\bf B}\|_1<1$, Theorem \ref{moments_P-INGARCH} gives that the score function associated with $\beta_0,\beta_1$ and $\beta_2$ is integrable. We now study the score function associated with $\alpha_0$, $\alpha_1$ and $\alpha_2$. Note that $\left|-\dfrac{y_t-\lambda_t}{\lambda_t+\phi_t}\right|\leq \dfrac{y_t}{\lambda_t+\phi_t}+1\leq\dfrac{y_t}{2}\left(\dfrac{1}{\beta_0}+\dfrac{1}{\alpha_0}\right)+1$, where we have used that fact that  $\dfrac{1}{a+b}\leq\dfrac{1}{2}\left(\dfrac{1}{a}+\dfrac{1}{b}\right)$ for $a,b>0$. Also, by using that $\dfrac{1}{2x}<\log x-\Psi(x)<\dfrac{1}{x}$ for $x>0$ \citep{alz1997}, we obtain that $0<\log\phi_t-\Psi(\phi_t)<\phi_t^{-1}<\alpha_0^{-1}$. Similarly, it follows that $|\Psi(\lambda_t+\phi_t)-\log(\lambda_t+\phi_t)|<(\lambda_t+\phi_t)^{-1}\leq\dfrac{1}{2}\left(\dfrac{1}{\beta_0}+\dfrac{1}{\alpha_0}\right)$. Therefore, $|S_{2t}|$ is bounded above by a linear combination of $y_t$. Moreover, $\dfrac{\partial\phi_t}{\partial\alpha_j}$ is a linear combination of $y_{t-1},\ldots,y_1$ and $\phi_1$, for $j=0,1,2$. Again, an application of  Theorem \ref{moments_P-INGARCH} gives that the score function associated with $\alpha_0$, $\alpha_1$ and $\alpha_2$ is also integrable.

\begin{Lemma}\label{third_derivat_bound}
Let $\mathcal B$ as defined in (\ref{B-set}). Under the linear negative binomial tv-DINGARCH(1,1,1,1) model with $\|{\bf A}\|_1+\|{\bf B}\|_1<1$, there exists a sequence of random variables $\{M_n\}_{n\in\mathbb N}$ such that $$\max_{i,j,k=1,2,\ldots,6}\sup_{\boldsymbol\theta\in\mathcal B}\bigg|\dfrac{1}{n}\dfrac{\partial^3\ell(\boldsymbol\theta)}{\partial\theta_i\partial\theta_j\partial\theta_k}\bigg|\leq M_n,$$ where $M_n\stackrel{p}{\longrightarrow}m$ as $n\rightarrow\infty$, with $m<\infty$ and $\boldsymbol\theta=(\theta_1,\theta_2,\theta_3,\theta_4,\theta_5,\theta_6)^\top=(\beta_0,\beta_1,\beta_2,\alpha_0,\alpha_1,\alpha_2)^\top$. 
\end{Lemma}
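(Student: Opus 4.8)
The plan is to bound the third-order partial derivatives of $\ell(\boldsymbol\theta)=\sum_{t=2}^n\ell_t(\boldsymbol\theta)$ term by term and uniformly over $\mathcal B$, by showing each $|\partial^3\ell_t/\partial\theta_i\partial\theta_j\partial\theta_k|$ is dominated by some random variable $W_t$ that depends only on $Y_t$ (and the initial values $\lambda_1,\phi_1$) through a polynomial, and then invoking Theorem \ref{moments_P-INGARCH} together with the ergodic theorem to conclude that $M_n=\frac1n\sum_{t=2}^n W_t$ converges in probability to a finite constant $m=E(W_t)$. First I would recall from the score computation in Section \ref{sec:stat_inf} that $\ell_t$ depends on $\boldsymbol\theta$ only through $\lambda_t$ and $\phi_t$, so by the chain rule every third derivative $\partial^3\ell_t/\partial\theta_i\partial\theta_j\partial\theta_k$ is a finite sum of products of (i) partial derivatives of $\ell_t$ with respect to $\lambda_t,\phi_t$ up to order three, and (ii) partial derivatives of $\lambda_t,\phi_t$ with respect to the components of $\boldsymbol\theta$ up to order three.

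The next step is to control factor (ii). The recursions $\lambda_t=\beta_0+\beta_1Y_{t-1}+\beta_2\lambda_{t-1}$ and $\phi_t=\alpha_0+\alpha_1Y_{t-1}+\alpha_2\phi_{t-1}$ give, upon unfolding, $\lambda_t=\sum_{k=0}^{t-2}\beta_2^k(\beta_0+\beta_1 Y_{t-1-k})+\beta_2^{t-1}\lambda_1$, and similarly for $\phi_t$; their partial derivatives with respect to $\beta_0,\beta_1,\beta_2$ (resp. $\alpha_0,\alpha_1,\alpha_2$) up to order three are, as noted already in the proof of Proposition \ref{martingale_diff}, linear combinations of $Y_{t-1},\dots,Y_1$ and $\lambda_1$ (resp. $\phi_1$) with coefficients of the form $\binom{k}{r}\beta_2^{k-r}$ (resp. with $\alpha_2$). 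Since $\beta_2\le\beta_{2,up}<1$ and $\alpha_2\le\alpha_{2,up}<1$ on $\mathcal B$ — this is where the geometric decay that makes the sums summable comes in — each such derivative is bounded uniformly over $\boldsymbol\theta\in\mathcal B$ by an expression of the form $c_1+c_2\sum_{k\ge1}k^2 \rho^{k-1}Y_{t-k}$ with $\rho=\max(\beta_{2,up},\alpha_{2,up})<1$, i.e. by a convergent random series in the past observations; denote a generic such bound by $R_t$, and note $E(R_t^m)<\infty$ for every $m$ by Theorem \ref{moments_P-INGARCH} (finiteness of all moments of $Y_t$ and stationarity) together with Minkowski's inequality applied to the geometric series.

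Then I would control factor (i), the derivatives of $\ell_t$ in $(\lambda_t,\phi_t)$. Using $\lambda_t\ge\lambda^*=\beta_0/(1-\beta_2)\ge\beta_{0,low}/(1-\beta_{2,low})>0$ and $\phi_t\ge\phi^*=\alpha_0/(1-\alpha_2)>0$, all the rational pieces appearing in $\ell_t$ (such as $1/\lambda_t$, $1/(\lambda_t+\phi_t)$ and their powers) are bounded above by constants on $\mathcal B$; the terms involving $\log\lambda_t$, $\log(\lambda_t+\phi_t)$, $\Psi(\phi_t)$, $\Psi(y_t+\phi_t)$ and their derivatives are handled using the bounds $\frac{1}{2x}<\log x-\Psi(x)<\frac1x$, $0<\Psi'(x)<\frac1x+\frac{1}{x^2}$, $|\Psi''(x)|<\text{const}\cdot(x^{-2}+x^{-3})$ for $x\ge$ const (these elementary asymptotic/monotonicity bounds for the polygamma functions are standard; cf. \cite{alz1997}), so that $\Psi$-derivatives evaluated at $\phi_t$ or at $y_t+\phi_t$ are bounded by constants (for the $\phi_t$ argument) or by polynomials in $y_t$ (for the $y_t+\phi_t$ argument, since $y_t+\phi_t\ge\phi^*$). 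Multiplying the pieces, one sees every $|\partial^3\ell_t/\partial\theta_i\partial\theta_j\partial\theta_k|$ is bounded, uniformly on $\mathcal B$, by a polynomial in $Y_t$ times a polynomial expression in the $R_t$'s; call this dominating random variable $W_t$. Since $W_t$ is a fixed polynomial in $Y_t,Y_{t-1},Y_{t-2},\dots$ (with geometrically decaying coefficients) and $\lambda_1,\phi_1$, stationarity and ergodicity of $\{(Y_t,\lambda_t,\phi_t)\}$ from Theorem \ref{Thm_StatErg} plus finiteness of $E(W_t)$ (again Theorem \ref{moments_P-INGARCH} and Minkowski) give, via the ergodic theorem, $M_n:=\frac1n\sum_{t=2}^n W_t\stackrel{p}{\longrightarrow}m:=E(W_2)<\infty$, which is exactly the claim.

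The main obstacle is the bookkeeping in the third step: writing the third derivative of $\ell_t$ as a sum over the chain-rule partition terms and checking that in every term the $\Psi$-type factors with argument $y_t+\phi_t$ (the only place where unbounded-in-$y_t$ behaviour enters) are multiplied by enough bounded factors that the total degree in $Y_t$ stays finite — so that $W_t$ has all moments finite. I expect this to be routine but tedious, and it is the only place where one must be careful rather than merely invoke summability of geometric series; everything else (the geometric bound on the $\lambda_t,\phi_t$ derivatives, the lower bounds $\lambda_t\ge\lambda^*$, $\phi_t\ge\phi^*$, and the final ergodic-theorem step) follows the pattern already used in the proof of Proposition \ref{martingale_diff} and of Theorem \ref{moments_P-INGARCH}.
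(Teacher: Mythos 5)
Your proposal is correct and follows essentially the same route as the paper's proof: decompose each third derivative via the chain rule into derivatives of $\ell_t$ in $(\lambda_t,\phi_t)$ times derivatives of $\lambda_t,\phi_t$ in the parameters, bound the latter by geometrically weighted sums $\sum_j \rho^{j-1}Y_{t-j}$ using $\beta_{2,up},\alpha_{2,up}<1$, bound the former by constants and linear functions of $Y_t$ using $\lambda_t\ge\lambda^*$, $\phi_t\ge\phi^*$ and polygamma inequalities of the type $|\Psi''(x)|\le C(x^{-2}+x^{-3})$, and conclude with Theorem \ref{moments_P-INGARCH} and the ergodic law of large numbers. The only presentational difference is that the paper works out two representative cases ($\partial^3/\partial\beta_1^3$ and $\partial^3/\partial\alpha_1^3$) explicitly and declares the rest analogous, whereas you give the general chain-rule skeleton without an explicit case; the mathematical content is the same.
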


\begin{proof}
We will show that the result holds for the third derivative of $\ell(\cdot)$ with respect to $\beta_1$ and also $\alpha_1$. The remaining cases are similar and therefore are omitted. 

The third derivative of $\ell_t(\boldsymbol\theta)$ with respect to $\beta_1$ is given by $\dfrac{\partial^3\ell_t(\boldsymbol\theta)}{\partial\beta_1^3}=\dfrac{\partial^2 S_{1t}}{\partial\phi_t^2}\left(\dfrac{\partial\lambda_t}{\partial\beta_1}\right)^3$, where
\begin{eqnarray*}
\dfrac{\partial^2 S_{1t}}{\partial\phi_t^2}=\dfrac{2\phi_t}{\lambda_t^2(\lambda_t+\phi_t)^2}\left\{\dfrac{[y_t(2\lambda_t+\phi_t)-\lambda_t^2](2\lambda_t+\phi_t)}{\lambda_t(\lambda_t+\phi_t)}-(y_t-\lambda_t)\right\}
\end{eqnarray*}
and 
\begin{eqnarray*}
	\dfrac{\partial\lambda_t}{\partial\beta_1}=\sum_{j=1}^{t-1}\beta_2^{j-1}y_{t-j}.
\end{eqnarray*}

It is straightforward to obtain that there are constants $c_1,c_2>0$ that do not depend on $\boldsymbol\theta$ such that $\left|\dfrac{\partial^3\ell_t(\boldsymbol\theta)}{\partial\beta_1^3}\right|\leq c_2y_t+c_1$. Also, $\left|\dfrac{\partial\lambda_t}{\partial\beta_1}\right|\leq\displaystyle\sum_{j=1}^{t-1}\beta_{2,up}^{j-1}y_{t-j}\equiv \mu_{1t}$. Therefore, $\left|\dfrac{\partial^3\ell_t(\boldsymbol\theta)}{\partial\beta_1^3}\right|\leq c_2y_t\mu_{1t}^3+c_1\mu_{1t}^3\equiv M_{1t}$, 
where $M_{1t}$ does not depend on $\boldsymbol\theta$. Under the assumption $\|{\bf A}\|_1+\|{\bf B}\|_1<1$ and using Theorem \ref{moments_P-INGARCH}, the Law of Large Numbers for stationary and ergodic sequences can be applied to obtain that $n^{-1}\displaystyle\sum_{t=2}^n M_{1t}\stackrel{p}{\longrightarrow}m_1$  as $n\rightarrow\infty$, where $m_1$ is a finite constant, as similarly argued by \cite{foketal2009} (proof of Lemma 3.4 of their Supplementary Material).

The case for $\alpha_1$ is more involved. We have that $\dfrac{\partial^3\ell_t(\boldsymbol\theta)}{\partial\alpha_1^3}=\dfrac{\partial^2S_{2,t}}{\partial\phi_t^2}\left(\dfrac{\partial\phi_t}{\partial\alpha_1}\right)^3$, where 
\begin{eqnarray}\label{d2Sdphi2}
\dfrac{\partial^2S_{2,t}}{\partial\phi_t^2}&=&-2\dfrac{y_t-\lambda_t}{(\lambda_t+\phi_t)^3}+\dfrac{\lambda_t^2+2\lambda_t\phi_t}{\phi_t^2(\lambda_t+\phi_t)^2}+\Psi''(y_t+\phi_t)-\Psi''(\phi_t)
\end{eqnarray}
and
\begin{eqnarray*}
\dfrac{\partial\phi_t}{\partial\alpha_1}=\displaystyle\sum_{j=1}^{t-1}\alpha_2^{j-1}y_{t-j}.	
\end{eqnarray*}	

Inequality (15) from \cite{guoqi2013} gives that $|\Psi''(x)|<\dfrac{1}{x}+\dfrac{2}{x^3}$, for $x>0$. Hence, it follows that $|\Psi''(y_t+\phi_t)|<\dfrac{1}{(y_t+\phi_t)^2}+\dfrac{2}{(y_t+\phi_t)^3}\leq\dfrac{1}{\phi_t^2}+\dfrac{2}{\phi_t^3}\leq\dfrac{1}{\alpha_0^2}+\dfrac{2}{\alpha_0^3}\leq\dfrac{1}{\alpha_{0,low}^2}+\dfrac{2}{\alpha_{0,low}^3}$. Similarly, $|\Psi''(\phi_t)|<\dfrac{1}{\alpha_{0,low}^2}+\dfrac{2}{\alpha_{0,low}^3}$.
The modulus of the remaining terms in (\ref{d2Sdphi2}) are bounded above (as done for the third derivative involving $\beta_1$) by a linear function of $y_t$ with positive coefficients do not depend on $\boldsymbol\theta$. Moreover, $\left|\dfrac{\partial\phi_t}{\partial\alpha_1}\right|\leq\displaystyle\sum_{j=1}^{t-1}\alpha_{2,up}^{j-1}y_{t-j}\equiv \mu_{2t}$.

From the above results, we conclude that there exist constants $c_3>0$ and $c_4>0$ that do not depend on $\boldsymbol\theta$ such that 
$\left|\dfrac{\partial^3\ell_t(\boldsymbol\theta)}{\partial\alpha_1^3}\right|\leq c_4 y_t\mu_{2t}^3+c_3\mu_{2t}^3\equiv M_{2t}$.
Now, by arguing exactly as done for the case $\beta_1$ above, we obtain that $n^{-1}\sum_{t=2}^n M_{2t}\stackrel{p}{\longrightarrow} m_2$ when $n\rightarrow\infty$, where $m_2$ is a finite constant.
\end{proof}

\subsection{Proof of Theorem \ref{asymptotics}}

We will show that the Conditions (A1), (A2), and (A3) from Lemma 3.1 by \cite{jenrah2004} (see also \cite{foketal2009} and \cite{chrfok2014}) are satisfied, which yield the theorem. As argued before for the general case, we use Proposition \ref{martingale_diff} and the Central Limit Theorem for martingale difference to establish the weak convergence involving the score function in (\ref{score_asympt}). The existence and finiteness of the matrix $\boldsymbol\Omega_1(\boldsymbol\theta)$ follow from Theorem \ref{moments_P-INGARCH} that provides the finiteness of the moments of all orders for the linear NB tv-DINARCH process under the assumption that $\|{\bf A}\|_1+\|{\bf B}\|_1<1$. Therefore, Condition (A1) holds.
	
Moreover, we obtain from Theorem \ref{Thm_StatErg} that $\{Y_t\}_{t\in\mathbb Z}$ is a stationary and ergodic sequence. Then, the Law of Large Numbers for stationary and ergodic sequences gives us that (\ref{hessian_asympt}) holds true, with the existence and finiteness of the matrix $\boldsymbol\Omega_2(\boldsymbol\theta)$ being ensured again by Theorem \ref{moments_P-INGARCH}.

We prove next that  $\boldsymbol\Omega_1(\boldsymbol\theta)$ is non-singular.  
Suppose that it is singular. Then there exist vectors $\nu_1$, $\nu_{2}$ both in $\mathbb{R}^3$ such that the six-dimensional vector $\nu=(\nu_1^\top, \nu_2^\top)^\top$ satisfies 
$
\nu_1^\top {\partial\lambda_{t}}/{\partial\boldsymbol{\theta}_{1}}+
\nu_2^\top {\partial\phi_{t}}/{\partial\boldsymbol{\theta}_{2}}=0
$, a.s for $t \geq 1$. But both ${\partial\lambda_{t}}/{\partial\boldsymbol{\theta}_{1}}$ and 
$ {\partial\phi_{t}}/{\partial\boldsymbol{\theta}_{2}}$ are stationary so this holds for all $t$. However, 
\begin{align*}
\frac{\partial\lambda_{t}}{\partial\boldsymbol{\theta}_{1}} & = 
(1, Y_{t-1} ,\lambda_{t-1})^\top
+ 
\beta_2  \frac{\partial\lambda_{t-1}}{\partial\boldsymbol{\theta}_{1}}, \\ 
\frac{\partial\phi_{t}}{\partial\boldsymbol{\theta}_{2}} & = 
(1, Y_{t-1} ,\phi_{t-1})^\top
+ 
\alpha_2  \frac{\partial\phi_{t-1}}{\partial\boldsymbol{\theta}_{2}}.
\end{align*} 
The matrix $\boldsymbol\Omega_1(\boldsymbol\theta)$ is block-diagonal. Therefore it is  singular if and only if  $\nu_1^\top (1, Y_{t-1} ,\lambda_{t-1})^\top + \nu_2^\top (1, Y_{t-1} ,\phi_{t-1})^\top =0$ a.s for all $t \geq 1$.  
But this cannot hold because of the constant term 1 and the fact that 
$Y_{t}$,  $\lambda_{t}$ and $\phi_{t}$ 
are non-zero, for some $t$. 

That is, Condition (A2)  from Lemma 3.1 by \cite{jenrah2004} is satisfied. Finally, Condition (A3) has been established in Lemma \ref{third_derivat_bound}.

\section*{Acknowledgments}
We would like to thank the Editor and two reviewers for their careful reading and constructive comments that improved the presentation. 
H. Ombao would like to acknowledge support from the KAUST Research Fund. L.S.C. Piancastelli wishes to acknowledge the financial support from the Science Foundation Ireland.

\section*{Declarations}
\noindent {\bf Conflict of interest:} The authors have no conflicts of interest to declare.

\end{document}